\newtheorem{theorem}{Theorem}
\newtheorem{lemma}{Lemma}
\theoremstyle{definition}
\newtheorem{definition}{Definition}
\theoremstyle{remark}
\newtheorem{remark}{Remark}
\theoremstyle{definition}
\theoremstyle{definition}
\newcommand{\N}{\mathbb{N}}
\newcommand{\R}{\mathbb{R}}
\newcommand{\C}{\mathcal{C}}
\newcommand{\K}{\mathcal{K}}
\definecolor{darkblue}{RGB}{0,0,102}
\definecolor{lightblue}{RGB}{77,77,148}
\definecolor{gold}{RGB}{234, 170, 0}
\definecolor{metallic_gold}{RGB}{139, 111, 78}
\newcommand{\mb}[1]{\mathbf{ #1 }}
\newcommand{\bs}[1]{\boldsymbol{ #1 }}
\newcommand{\bb}[1]{\mathbb{ #1 }}
\newcommand{\grad}{\nabla}
\DeclareMathOperator*{\argmin}{argmin}
\newcommand{\new}[1]{{\color{black} #1}}
\newcommand{\lmat}{\begin{bmatrix}}
\newcommand{\rmat}{\end{bmatrix}}
\newcommand{\Z}{\bb{Z}}
\theoremstyle{definition}
\title{\LARGE \textbf{Safety of Sampled-Data Systems with Control Barrier Functions \\via Approximate Discrete Time Models}}
\author{Andrew J. Taylor$^{1}$, Victor D. Dorobantu$^{1}$, Ryan K. Cosner$^{1}$, Yisong Yue, and Aaron D. Ames
\thanks{This research is supported by Ford, BP, and the National Science Foundation (CPS Award \#1932091, CMMI Award \#1923239).}
\thanks{$^1$Authors contributed equally. A. J. Taylor, V. D. Dorobantu, R. K. Cosner, Y. Yue, and A. D. Ames are with the Department of Computing and Mathematical Sciences, California Institute of Technology, Pasadena, CA 91125, USA, {\tt\small \{ajtaylor, vdoroban, rkcosner, yyue, ames\}@caltech.edu}. Y. Yue is affiliated with Argo AI, Pittsburgh, PA.
}
}
\newcommand{\XcapC}[1]{\mathcal{X} \cap (\mathcal{C} \oplus \overline{B}_{ #1 })}
\newcommand{\safety}{\mathtt{s}}
\begin{document}

\maketitle
\thispagestyle{empty}
\begin{abstract}
Control Barrier Functions (CBFs) have been demonstrated to be powerful tools for safety-critical controller design for nonlinear systems. Existing design paradigms do not address the gap between theory (controller design with continuous time models) and practice (the discrete time sampled implementation of the resulting controllers); this can lead to poor performance and violations of safety for hardware instantiations.  We propose an approach to close this gap by synthesizing sampled-data counterparts to these CBF-based controllers using approximate discrete time models and \emph{Sampled-Data Control Barrier Functions (SD-CBFs)}. Using properties of a system's continuous time model, we establish a relationship between SD-CBFs and a notion of \emph{practical safety} for sampled-data systems. Furthermore, we construct convex optimization-based controllers that formally endow nonlinear systems with safety guarantees in practice. We demonstrate the efficacy of these controllers in simulation. 
\end{abstract}

\section{Introduction}
\label{sec:intro}

Nonlinear control methods offer promising solutions to many modern safety-critical engineering applications. However, theoretically sound controller designs often fail to meet safety requirements when deployed on real systems. Thus, it is critical to understand the discrepancies between theoretical design and practical implementation mathematically, and to design controllers that close these gaps. Specifically, we address the challenges in designing safety-critical controllers for continuous time systems for which controllers are realized with discrete time sampling implementations, known as the sampled-data control problem \cite{monaco2007advanced}.

Control Barrier Functions (CBFs) have become a popular tool for constructively synthesizing controllers that endow nonlinear systems with rigorous safety guarantees \cite{ames2014control, ames2019control}. While originally posed for continuous time systems, they have similarly been developed for discrete time systems \cite{agrawal2017discrete} and sampled-data systems \cite{ghaffari2018safety, cortez2019control,gurriet2019realizable, yang2019selftriggered, singletary2020control, breeden2021control, usevitch2021adversarial, niu2021safety, zhang2021control}. These existing works take an \textit{emulative} approach to sampled-data control, in which continuous time safety conditions are met more conservatively to ensure that a system remains safe throughout a sample period. The approaches in \cite{cortez2019control, gurriet2019realizable, breeden2021control, usevitch2021adversarial, niu2021safety, zhang2021control} achieve this by adding a margin term to the standard CBF derivative condition that captures possible changes in the dynamics and CBF during the inter-sample period. This margin term often directly incorporates exponentials of Lipschitz constants and the sample period, requiring exceptionally high sample rates to achieve good performance, as studied in \cite{breeden2021control}. The work in \cite{singletary2020control} takes a computationally intensive approach to reduce conservatism by propagating sensitivity functions, which may be difficult for high-dimensional systems.

While the aforementioned results have focused on safety for sampled-data nonlinear systems, there exists a significant body of literature on stabilization of sampled-data nonlinear systems through discrete time design using \textit{approximate models}. Motivated by the challenge of finding exact representations of the discrete time sampled-data dynamics of nonlinear systems, the work in \cite{nevsic1999sufficient, nesic2004framework} proposed a framework for achieving a type of \textit{practical stability} using approximations of the discrete time sampled-data dynamics. Subsequently, a number of standard nonlinear stabilization techniques such as backstepping \cite{nesic2001backstepping}, model predictive control \cite{grune2003optimization}, Lyapunov-redesign \cite{nevsic2005lyapunov}, and optimization-based control via Control Lyapunov Functions \cite{taylor2021sampled}, were extended to use approximate models of discrete-time dynamics. These approaches often yielded significant improvements over their continuous time counterparts, even at relatively slow sample rates \cite{laila2003changing}. Notably, a similar framework for achieving safety has yet to be proposed.

In this work we propose a novel approach for achieving safety of sampled-data nonlinear systems via approximate models of discrete time sampled-data dynamics. In Section \ref{sec:background} we describe the sampled-data control setting and establish a \textit{consistency} result on how accurately sampled-data dynamics of a nonlinear system can be captured by a Runge-Kutta approximation. In Section \ref{sec:sdcbf} we propose a novel definition of \textit{practical safety} for sampled-data systems. Our definition mirrors the notion of practical stability developed in \cite{nevsic1999sufficient}, such that a system is practically safe if its state can be kept arbitrarily close to a safe set at sample times through sufficiently high sample rates. This leads to the unification of discrete time barrier functions \cite{agrawal2017discrete} with regularity properties developed in \cite{nevsic1999sufficient}, wherein we formulate \textit{Sampled-Data Barrier Functions (SD-BFs)} and their control counterparts, \textit{Sampled-Data Control Barrier Functions (SD-CBFs)}. We establish properties of this new class of CBFs and relate them to regular values.

The main contribution of this paper, given in Section \ref{sec:pracsafe}, establishes the practical safety of sampled-data systems through SD-BFs. We achieve this by connecting the key properties of SD-BFs with the accuracy guarantees provided by consistent approximations of the discrete sampled-data dynamics. This result is used to inform controller synthesis in Section \ref{sec:synthesis}, where we explore how appropriately designed SD-CBFs and Runge-Kutta approximations of systems with higher-order relative degrees preserve convexity with respect to the input of the SD-CBF difference constraint. This allows for SD-CBFs to be directly incorporated into a convex optimization-based controller that achieves practical safety. We demonstrate this controller in simulation, illustrating the relationship between sample rate and practical safety. The proof of the main result is given in the text, with all other proofs in the included appendix.

\section{Sampled-Data Control}
\label{sec:background}
Throughout this work, we will consider the nonlinear control system governed by the differential equation:
\begin{equation}\label{eqn:nonlinear-dynamics}
    \dot{\mb{x}} = \mb{f}(\mb{x}) + \mb{g}(\mb{x})\mb{u},
\end{equation}
for state signal $\mb{x}$ and control input signal $\mb{u}$ taking values in $\R^n$ and $\R^m$, respectively, drift dynamics $\mb{f}: \R^n \to \R^n$, and actuation matrix function $\mb{g}: \R^n \to \R^{n \times m}$. Consider an open subset $\mathcal{Z} \subseteq \R^n \times \R^m$ and its projection onto the state space \new{ $\mathcal{X} \triangleq \{\mb{x} \in \R^n ~|~ \exists~ \mb{u} \in \R^m ~\mathrm{s.t.}~ (\mb{x},\mb{u})\in\mathcal{Z}\}\subseteq\R^n$}. Assume there exists $T_{\mathrm{max}} \in \R_{++}$ \new{(the strictly positive reals)} such that for every state-input pair $(\mb{x}, \mb{u}) \in \mathcal{Z}$, there exists a unique solution $\bs{\varphi}:[0,T_{\mathrm{max}}]\to\R^n$ satisfying:
\begin{align}
\label{eqn:varphiode}
    \dot{\bs{\varphi}}(t) &= \mb{f}(\bs{\varphi}(t)) + \mb{g}(\bs{\varphi}(t))\mb{u}, \quad  
    \bs{\varphi}(0) = \mb{x}. 
\end{align}
for all $t \in (0, T_\mathrm{max})$.
Given an $h\in(0,T_{\mathrm{max}}]$, we define a controller $\mb{k}: \mathcal{X} \to \R^m$ as \textit{$h$-admissible} if for any state $\mb{x} \in \mathcal{X}$, the state-input pair $(\mb{x}, \mb{k}(\mb{x}))$ satisfies $(\mb{x}, \mb{k}(\mb{x})) \in \mathcal{Z}$ and the corresponding solution $\bs{\varphi}$ satisfies $\bs{\varphi}(t) \in \mathcal{X}$ for all $t\in[0,h]$.
\begin{remark}
This requirement on $h$-admissible controllers will ensure that in the sampled-data context, the \new{closed-loop system is forward complete and its evolution may be described by iterative solutions to \eqref{eqn:varphiode}}. Though verifying $h$-admissibility of a controller may be intractable, assuming that a controller is $h$-admissible and renders the set $\mathcal{X}$ invariant is relatively weak as $\mathcal{X}$ is defined to ensure the continued existence of solutions rather than reflecting a task-specific set.
\end{remark}

The preceding construction of solutions and admissible controllers describes the sampled-data control setting, in which inputs are applied to the system with a zero-order hold over a sample period. More precisely, the set of possible sample periods is given by $I = (0, T_{\mathrm{max}}]$. Given a sample period $h\in I$ and an $h$-admissible controller $\mb{k}:\mathcal{X}\to\R^m$, the state and control input signals in \eqref{eqn:nonlinear-dynamics} satisfy:
\begin{equation} 
    \mb{u}(t) = \mb{k}(\mb{x}(t_k)) \quad \forall t\in[t_k,t_{k+1}),
\end{equation}
with sample times satisfying $t_{k+1}-t_k = h$ for all $k \in \bb{Z}_+$ \new{(the non-negative integers)}. In general, the evolution of the system over a sample period is given by the \textit{exact map} $\mb{F}_h^{e}: \mathcal{Z} \to \R^n$:
\begin{align}
    \mb{F}^{e}_h(\mb{x},\mb{u}) &= \mb{x} + \int_{0}^{h} [\mb{f}(\bs{\varphi}(\tau)) + \mb{g}(\bs{\varphi}(\tau))\mb{u}]~ \mathrm{d}\tau,
\end{align}
for all state-input pairs $(\mb{x}, \mb{u}) \in \mathcal{Z}$. We call $\{ \mb{k}_h: \mathcal{X} \to \R^m ~|~ h \in I \}$ a \textit{family of admissible controllers} if there is an $h^* \in I$ such that for each $h\in (0, h^*)$, $\mb{k}_h$ is $h$-admissible. This enables the following definition:

\begin{definition}[\textit{Exact Family}]
We define the \textit{exact family of maps} $\{\mb{F}^e_h~|~ h\in I\}$, and for a family of admissible controllers $\{ \mb{k}_h ~|~ h \in I \}$, we define the \textit{exact family of controller-map pairs} $\{ (\mb{k}_h, \mb{F}_h^{e}) ~|~ h \in I \}$.
\end{definition}
For all $h\in I$ such that $\mb{k}_h$ is $h$-admissible, the recursion $\mb{x}_{k + 1} = \mb{F}^{e}_h(\mb{x}_k, \mb{k}_h(\mb{x}_k))\in \mathcal{X}$ is well-defined for all $\mb{x}_0\in\mathcal{X}$ and $k\in\mathbb{Z}_+$. In practice, closed-form expressions for the exact family of maps are rarely obtainable, suggesting the use of approximations in the control synthesis process. While there are many approaches to approximating this family of maps, we will use the following common class of approximations:

\begin{definition}[\textit{Runge-Kutta Approximation Family}]
Let $p\in\mathbb{N}$. We define the \textit{Runge-Kutta approximation family of maps} $\{\mb{F}_h^{a,p}~|~ h\in I\}$, where for every sample period $h\in I$, define $\mb{F}^{a,p}_h:\mathcal{Z}\to\R^n$ recursively as:
\begin{align}
    \mb{F}_h^{a, p}(\mb{x}, \mb{u}) &= \mb{x} + h \sum_{i = 1}^p b_i (\mb{f}(\mb{z}_i) + \mb{g}(\mb{z}_i)\mb{u}), \label{eqn:rkupdate}\\
    \mb{z}_i &= \mb{x} + h \sum_{j = 1}^{i - 1} a_{i,j} (\mb{f}(\mb{z}_j) + \mb{g}(\mb{z}_j)\mb{u}), \label{eqn:rkz}
\end{align}
for all pairs $(\mb{x}, \mb{u}) \in \mathcal{Z}$, with $\mb{z}_1=\mb{x}$. Here, $b_1, \dots, b_p \in\R_{+}$ satisfy $\sum_{i =}^p b_i = 1$ and $a_{i,j}\in\R$ for each $i \in \{ 1, \dots p \}$ and $j \in \{1, \dots, i - 1 \}$. For a family of admissible controllers $\{\mb{k}_h~|~h\in I\}$, we may define the \textit{Runge-Kutta approximation family of controller-map pairs} $\{(\mb{k}_h,\mb{F}^{a,p}_h)~|~ h\in I\}$.
\end{definition}

\begin{remark}
For $h \in I$, the $h$-admissibility of $\mb{k}_h$ does not necessarily imply that the recursion $\mb{x}_{k + 1} = \mb{F}_h^{a, p}(\mb{x}_k, \mb{k}_h(\mb{x}_k))$ is well-defined for all $\mb{x}_0 \in \mathcal{X}$ and $k \in \Z_+$, though this is not necessary for our results.
\end{remark}


Defining class $\mathcal{K}$ ($\mathcal{K}_\infty$) and $\mathcal{K}^e$ ($\mathcal{K}^e_\infty$) comparison functions as in \cite{kellett2014compendium} and \cite{ames2019control}, the following definition characterizes how accurately an approximate map captures the exact map:

\begin{definition}[\textit{One-Step Consistency}]
 A family $\{ (\mb{k}_h,\mb{F}_h): h \in I \}$ is \textit{one-step consistent} with $\{(\mb{k}_h, \mb{F}_h^{e}) ~|~ h \in I \}$ over a set $A\subseteq\mathcal{X}$ if there exist a comparison function $\rho\in\K_\infty$ and $h^*\in I$ such that for all $\mb{x}\in A$ and $h\in(0,h^*)$, we have:
 \begin{equation}
 \label{eqn:one-step-cons}
     \Vert \mb{F}_h^{e}(\mb{x},\mb{k}_h(\mb{x}))-\mb{F}_h(\mb{x},\mb{k}_h(\mb{x})) \Vert \leq h\rho(h).
 \end{equation}
\end{definition}

Before establishing a relationship between a Runge-Kutta approximation family and one-step consistency, we state the following lemma we will use throughout this work:

\begin{lemma}
\label{lem:ballfit}
For any compact set $K \subset \mathcal{X}$, there is an $\varepsilon \in \R_{++}$ such that $K \oplus \overline{B}_\varepsilon \subset \mathcal{X}$ and $K \oplus \overline{B}_\varepsilon$ is compact, where $\overline{B}_\varepsilon$ is the closed norm-ball of radius $\varepsilon$ and $\oplus$ is the Minkowski sum.
\end{lemma}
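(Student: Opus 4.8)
The plan is to exploit that $\mathcal{X}$ is open---being the projection onto $\R^n$ of the open set $\mathcal{Z}$---so that the compact set $K$ sits at a strictly positive distance from the closed complement $\R^n \setminus \mathcal{X}$, and then to pad $K$ by any radius below this distance. Compactness of the padded set will follow from the continuity of vector addition.

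First I would dispense with the degenerate case $\mathcal{X} = \R^n$: here $K \oplus \overline{B}_\varepsilon \subset \mathcal{X}$ holds for every $\varepsilon$, so any positive choice suffices for the inclusion. In the remaining case $\R^n \setminus \mathcal{X} \neq \emptyset$, I would introduce the distance function $d(\mb{x}) \triangleq \inf_{\mb{y} \in \R^n \setminus \mathcal{X}} \Vert \mb{x} - \mb{y} \Vert$. This map is $1$-Lipschitz, hence continuous, and because $\R^n \setminus \mathcal{X}$ is closed we have $d(\mb{x}) > 0$ precisely when $\mb{x} \in \mathcal{X}$. Since $K \subset \mathcal{X}$ is compact and $d$ is continuous, $d$ attains a minimum $\delta \triangleq \min_{\mb{x} \in K} d(\mb{x})$, which is therefore strictly positive. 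I would then set $\varepsilon \triangleq \delta / 2 \in \R_{++}$.

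To verify the inclusion, take any $\mb{z} \in K \oplus \overline{B}_\varepsilon$ and write $\mb{z} = \mb{x} + \mb{b}$ with $\mb{x} \in K$ and $\Vert \mb{b} \Vert \le \varepsilon$. For every $\mb{y} \in \R^n \setminus \mathcal{X}$, the reverse triangle inequality gives $\Vert \mb{z} - \mb{y} \Vert \ge \Vert \mb{x} - \mb{y} \Vert - \Vert \mb{b} \Vert \ge d(\mb{x}) - \varepsilon \ge \delta - \varepsilon = \delta/2 > 0$; taking the infimum over $\mb{y}$ shows $d(\mb{z}) > 0$, so $\mb{z} \in \mathcal{X}$, establishing $K \oplus \overline{B}_\varepsilon \subset \mathcal{X}$. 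For compactness, I would observe that $K \oplus \overline{B}_\varepsilon$ is the image of the compact product $K \times \overline{B}_\varepsilon$ under the continuous addition map $(\mb{x}, \mb{b}) \mapsto \mb{x} + \mb{b}$, and is therefore compact.

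I expect no serious obstacle, as this is a standard separation argument between a compact set and a disjoint closed set. The only points requiring care are recording that $\mathcal{X}$ is open (so that its complement is closed and $d$ is strictly positive on $K$) and handling the case $\mathcal{X} = \R^n$ separately, since there the infimum defining $d$ is taken over the empty set.
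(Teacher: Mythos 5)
Your proof is correct, but it reaches the key uniform-padding radius by a different mechanism than the paper. The paper never introduces a distance function: it covers $K$ by balls $B_{\mb{x}}$ of radius $\delta_{\mb{x}}/2$ (where the ball of radius $\delta_{\mb{x}}$ around $\mb{x}$ lies in $\mathcal{X}$), extracts a finite subcover by compactness, sets $\delta = \min_i \delta_{\mb{x}_i}$ over the finitely many centers, and checks via the triangle inequality that $K \oplus \overline{B}_{\delta/4} \subseteq \mathcal{X}$. You instead use the classical separation argument: the distance function $d$ to the closed complement $\R^n \setminus \mathcal{X}$ is $1$-Lipschitz and strictly positive on $\mathcal{X}$, so it attains a positive minimum $\delta$ on the compact set $K$, and padding by $\delta/2$ stays inside $\mathcal{X}$ by the reverse triangle inequality. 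Your route is arguably cleaner and yields a more natural constant (half the distance from $K$ to the complement), at the cost of needing the extreme value theorem and a separate treatment of the edge case $\mathcal{X} = \R^n$, where the infimum defining $d$ runs over the empty set --- a case the paper's subcover argument handles automatically, since it only ever uses one finite radius $\delta_{\mb{x}}$ per point. The compactness step is identical in both proofs: $K \oplus \overline{B}_\varepsilon$ is the image of the compact product $K \times \overline{B}_\varepsilon$ under the continuous addition map.
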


We now provide our first result showing how properties of the dynamics and a family of controllers can be used to establish one-step consistency of the Runge-Kutta approximation family with the exact family of controller-map pairs:

\begin{theorem}
\label{thm:onestep}
Suppose $\mb{f}$ and $\mb{g}$ are locally Lipschitz continuous over $\mathcal{X}$. Let $K\subset\mathcal{X}$ be compact, consider a family of admissible controllers $\{\mb{k}_h ~|~ h \in I\}$, and suppose there exists $h_1\in I$ and a bound $M_{K}\in\R_{+}$ such that for every sample period $h\in(0,h_1)$, the controller $\mb{k}_h$ is bounded by $M_K$ over $K$. Then the family $\{(\mb{k}_h,\mb{F}^{a,p}_h)~|~ h\in I\}$ is one-step consistent with $\{(\mb{k}_h,\mb{F}^{e}_h)~|~ h\in I\}$ over the set $K$. 
\end{theorem}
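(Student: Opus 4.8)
The plan is to exploit the fact that the exact map and the Runge-Kutta map agree to first order in $h$: writing each as $\mb{x} + h(\mb{f}(\mb{x}) + \mb{g}(\mb{x})\mb{u}) + O(h^2)$, the zeroth- and first-order terms cancel, leaving a difference that is quadratic in $h$. This immediately yields the required bound $h\rho(h)$ with $\rho$ linear.

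First I would invoke Lemma \ref{lem:ballfit} to obtain $\varepsilon \in \R_{++}$ such that the enlarged set $K \oplus \overline{B}_\varepsilon$ is compact and contained in $\mathcal{X}$. On this compact set, local Lipschitz continuity of $\mb{f}$ and $\mb{g}$ upgrades to Lipschitz continuity with a single constant $L$, while continuity furnishes uniform bounds $\norm{\mb{f}(\mb{y})} \le M_f$ and $\norm{\mb{g}(\mb{y})} \le M_g$ for $\mb{y} \in K \oplus \overline{B}_\varepsilon$. Combined with the hypothesized controller bound $\norm{\mb{k}_h(\mb{x})} \le M_K$ on $K$, this gives a uniform speed bound $C = M_f + M_g M_K$ on the closed-loop vector field whenever its argument lies in $K \oplus \overline{B}_\varepsilon$.

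The next two steps confine the relevant points to this compact set for sufficiently small $h$. For the exact solution, a continuity (first-exit-time) argument shows $\norm{\bs{\varphi}(\tau) - \mb{x}} \le C\tau$ for all $\tau \in [0,h]$ as long as $h < \varepsilon/C$: were the trajectory to reach distance $\varepsilon$ from $\mb{x}$, a first exit time would contradict the speed bound while the trajectory still lies in $K \oplus \overline{B}_\varepsilon$. For the finitely many Runge-Kutta stages, an induction on $i$ using \eqref{eqn:rkz} shows $\norm{\mb{z}_i - \mb{x}} \le h A C$, where $A = \max_i \sum_{j=1}^{i-1}\vert a_{i,j}\vert$, so each $\mb{z}_i$ lies in $K \oplus \overline{B}_\varepsilon$ once $h$ is small. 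Taking $h^*$ to be the minimum of $h_1$ and these thresholds makes all estimates simultaneously valid on $(0,h^*)$.

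Finally I would subtract the two maps. Adding and subtracting the vector field evaluated at $\mb{x}$ in both expressions, the leading terms $h(\mb{f}(\mb{x}) + \mb{g}(\mb{x})\mb{u})$ cancel (using $\sum_{i=1}^p b_i = 1$ on the Runge-Kutta side), and each surviving term is controlled via $L$ against the displacement bounds above. The exact remainder is bounded by $\int_0^h L(1 + M_K) C\tau\, \mathrm{d}\tau = \tfrac{1}{2} L(1+M_K) C h^2$, and the Runge-Kutta remainder by $L(1+M_K) A C h^2$, so the total is at most $D h^2$ for an explicit constant $D$. Setting $\rho(h) = D h \in \classKinfty$ gives \eqref{eqn:one-step-cons} and completes the proof. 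I expect the main obstacle to be the confinement argument: one must establish \emph{a priori} that both the exact trajectory and every intermediate stage $\mb{z}_i$ remain in $K \oplus \overline{B}_\varepsilon$, since the Lipschitz and boundedness estimates used to prove confinement are only available on that set.
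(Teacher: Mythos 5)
Your proposal is correct and follows essentially the same approach as the paper's proof: Lemma \ref{lem:ballfit} to obtain a compact enlargement $K \oplus \overline{B}_\varepsilon$, uniform speed and Lipschitz bounds on that set, a first-exit-time argument confining $\bs{\varphi}$ together with an induction confining the stages $\mb{z}_i$, and the add-and-subtract cancellation of the leading terms using $\sum_{i=1}^p b_i = 1$. The only differences are cosmetic: you integrate the displacement bound $C\tau$ to get an explicit $\tfrac{1}{2}L(1+M_K)Ch^2$, whereas the paper bounds $\Vert \bs{\varphi}(t) - \mb{x} \Vert \leq Mh$ uniformly and writes the result as $h\tilde{\rho}(h)$ with $\tilde{\rho}$ linear; both yield the same quadratic-in-$h$ estimate and a linear class $\mathcal{K}_\infty$ function $\rho$.
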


\section{Sampled-Data Control Barrier Functions}
\label{sec:sdcbf}
In this section we develop a notion of \textit{practical safety} for sampled-data systems, and define \textit{Sampled-Data Control Barrier Functions} (SD-CBFs) as a tool for safety-critical sampled-data control synthesis. Lastly, we highlight a familiar setting which satisfies the properties required by SD-CBFs. 

We begin with the following definition relating the evolution of a sampled-data system and a set:
\begin{definition}[\textit{Forward Invariance}]
    A set $\mathcal{C}\subseteq \mathcal{X}$ is \textit{forward invariant} for a controller-map pair $(\mb{k}, \mb{F})$ if for every $\mb{x}_0 \in \mathcal{C}$ and number of steps $k \in \Z_+$, the recursion $\mb{x}_{k + 1} = \mb{F}(\mb{x}_k, \mb{k}(\mb{x}_k))$ is well-defined and satisfies $\mb{x}_k \in \mathcal{C}$.
\end{definition}

\begin{remark}
This definition of forward invariance requires that the system state be contained in the set $\C$ at sample times, which is aligned with the notion of stability for sampled-data systems presented in \cite{nevsic1999sufficient}. This differs from the standard definition of forward invariance used in the existing sampled-data safety literature, which additionally requires that the solution remain in the set $\C$ between sample times, i.e. $\bs{\varphi}(t)\in\C$ for $t\in[t_k,t_{k+1}]$. As seen in this literature, requiring inter-sample safety typically requires selecting control actions that meet a robustified continuous time barrier derivative condition. This robust condition typically depends on parameters of the system that are difficult to estimate, and using over-approximations may produce very conservative behavior \cite{breeden2021control}. Reducing this conservativeness usually amounts to operating at exceedingly high sample rates, which may not be practical, and which may excite unmodeled features of the system dynamics. Moreover, in practice, inter-sample safety violations at high sample rates can be inconsequential (and may not even be detectable). 
\end{remark}

We often do not have a closed-form expression for the exact family of maps and will need to design controllers using an approximate family of maps. The following definition will be used to describe the safety properties of the exact family of controller-map pairs when design uses approximations:

\begin{definition}[\textit{Practical Safety}]
A family $\{ (\mb{k}_h, \mb{F}_h)~|~ h \in I \}$ is \textit{practically safe} with respect to a set $\C \subseteq \mathcal{X}$ if for each $R \in \R_{++}$, there exists an $h^* \in I$ such that for each sample period $h \in (0, h^*)$, there is a corresponding set $\C_h \subseteq \mathcal{X}$ that is forward invariant for the controller-map pair $(\mb{k}_h, \mb{F}_h)$ and satisfies $\C \subseteq \C_h \subseteq \C \oplus \overline{B}_R$.
\end{definition}

\begin{remark}
This definition is posed to mirror that of practical stability for sampled-data systems proposed in \cite{nevsic1999sufficient}. In particular, the burden of proof lies with small values of $R$. If $R' \geq R$ and $\mathcal{C}_h$ is a forward invariant subset of $ \mathcal{C}\oplus \overline{B}_R$, then it is automatically a forward invariant subset of $\mathcal{C} \oplus \overline{B}_{R'}$ . 
\end{remark}

Before defining Sampled-Data Control Barrier Functions, for a non-empty set $\C\subseteq\mathcal{X}$, denote $d_{\C}(\mb{x}) = \inf_{\mb{y}\in\C}\Vert\mb{y}-\mb{x}\Vert$ for all $\mb{x} \in \mathcal{X}$. We now define Sampled-Data Barrier Functions and Sampled-Data Control Barrier Functions:

\begin{figure}[t]
    \centering
    \includegraphics[width=\linewidth]{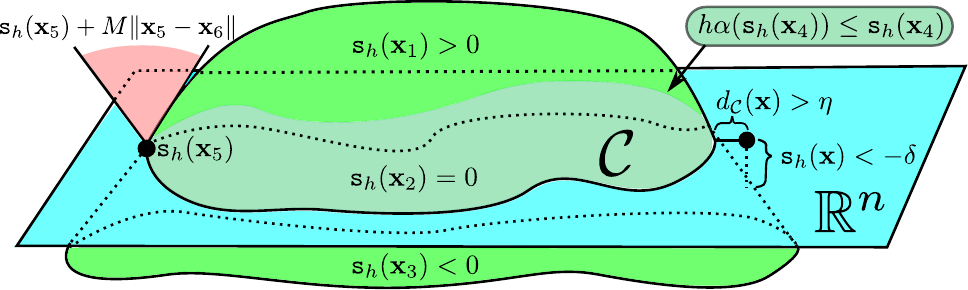}
    \caption{Visualizing the properties \eqref{eqn:bfvals}-\eqref{eqn:coercive} of SD-BF candidates. The dark green region represents the lower bound $h\alpha(\safety_h(\mb{x}_4))$ and $\safety_h(\mb{x}_6)$ cannot be in the red region due to the Lipschitz bound. }
    \label{fig:bfprops}
\end{figure}

\begin{definition}[\textit{Sampled-Data Barrier Function Candidate}]
\label{def:todobarfs}
Consider a set $\C\subseteq\mathcal{X}$. A collection of functions $\{ \safety_h ~|~ h \in I \}$ is a \textit{family of Sampled-Data Barrier Function Candidates} on $\C$ if there exist $h^* \in I$, a function $\alpha \in \mathcal{K}^e$, a radius $\varepsilon\in\R_{++}$, and a Lipschitz constant $M \in \R_{++}$ such that:
\begin{align}\
    & \safety_h(\mb{x}_1) > 0,\quad  \safety_h(\mb{x}_2) = 0, \quad \safety_h(\mb{x}_3) < 0,  \label{eqn:bfvals}\\
    & h\alpha(\safety_h(\mb{x}_4)) \leq \safety_h(\mb{x}_4), \label{eqn:class-k}\\
    &\vert \safety_h(\mb{x}_5) - \safety_h(\mb{x}_6) \vert \leq M \Vert \mb{x}_5 - \mb{x}_6 \Vert, \label{eqn:safelip}
\end{align}
for all states $\mb{x}_1 \in \mathrm{Int}(\C)$, $\mb{x}_2 \in \partial \C$, $\mb{x}_3 \in \mathcal{X} \setminus \C$, $\mb{x}_4\in \C$, $\mb{x}_5, \mb{x}_6\in \mathcal{X} \cap (\C \oplus \overline{B}_\varepsilon)$, and sample periods $h \in (0, h^*)$. Additionally, we require that for each $\eta \in \R_{++}$ there exists a $\delta \in \R_{++}$ such that\footnote{See Theorem \ref{thm:regval} for how this property relates to regular values.}:
\begin{equation}
\label{eqn:coercive}
    d_{\C}(\mb{x}) > \eta \implies \safety_h(\mb{x}) < -\delta,
\end{equation}
for all states $\mb{x} \in \mathcal{X} \cap (\C \oplus \overline{B}_\varepsilon)$ and sample periods $h \in (0, h^*)$.
\end{definition}
\begin{definition}[\textit{Sampled-Data Control Barrier Functions}]
A family of Sampled-Data Barrier Function Candidates $\{ \safety_h ~|~ h \in I \}$ is a \textit{family of Sampled-Data Control Barrier Functions} on $\mathcal{C}$ for $\{ \mb{F}_h ~|~ h \in I \}$ if for each state $\mb{x} \in \mathcal{X}$ and sample time $h \in (0, h^*)$, there exists a corresponding input $\mb{u} \in \R^m$ such that $(\mb{x}, \mb{u}) \in \mathcal{Z}$ and:
\begin{equation}
    \safety_h(\mb{F}_h(\mb{x}, \mb{u})) - \safety_h(\mb{x}) \geq -h \alpha (\safety_h(\mb{x})).\label{eqn:dt_cbf_condition}
\end{equation}
\end{definition}
\begin{definition}[\textit{Sampled-Data Barrier Function}]
Given a family of admissible controllers $\{ \mb{k}_h ~|~ h \in I \}$, a family of Sampled-Data Barrier Function Candidates $\{ \safety_h ~|~ h \in I \}$ is a \textit{family of Sampled-Data Barrier Functions} on $\mathcal{C}$ for $\{ (\mb{k}_h, \mb{F}_h) ~|~ h \in I \}$ if:
\begin{equation}
    \safety_h(\mb{F}_h(\mb{x}, \mb{k}_h(\mb{x}))) - \safety_h(\mb{x}) \geq -h \alpha (\safety_h(\mb{x})),\label{eqn:dt_bf_condition}
\end{equation}
for all states $\mb{x} \in \mathcal{X}$ and sample times $h \in (0, h^*)$.
\end{definition}

\begin{remark}
We note that the conditions in \eqref{eqn:bfvals} and \eqref{eqn:class-k} are standard conditions required by barrier functions for discrete systems \cite{agrawal2017discrete}. The inequalities in \eqref{eqn:bfvals} imply that for each $h\in(0,h^*)$, $\C$ is the 0-superlevel set of $\safety_h$. The inequality in \eqref{eqn:class-k} places a requirement on the SD-BF decrement through \eqref{eqn:dt_bf_condition} that implies that for each $h\in(0,h^*)$, $\C$ is forward invariant for $(\mb{k}_h,\mb{F}_h)$. The condition in \eqref{eqn:safelip} requires the SD-BF to be Lipschitz continuous over a slightly larger set than $\C$ with a Lipschitz constant that is uniform in the sample period, and will be used to relate exact and approximate families through one-step consistency. The implication in \eqref{eqn:coercive} resembles a \emph{coercivity} condition, requiring the SD-BF value to decrease locally outside of the set $\C$ in a way that is uniform in the sample period. This property will be critical for producing forward invariant sets contained in $\C\oplus\overline{B}_R$ for arbitrarily small values of $R$. The distinction between the conditions in \eqref{eqn:dt_cbf_condition} and \eqref{eqn:dt_bf_condition} is that the former condition states the possibility of safe control synthesis for an open-loop system, while the latter applies as a \textit{certificate} for a closed-loop system. These properties are illustrated in Fig. \ref{fig:bfprops}.
\end{remark}

To more clearly understand the nature of the properties \eqref{eqn:bfvals}-\eqref{eqn:dt_cbf_condition} we will discuss a familiar setting in which they are implied. As frequently used in the continuous time Control Barrier Function literature \cite{ames2019control}, a continuously differentiable function $\safety: \mathcal{X} \to \R$ has $c \in \R$ as a \textit{regular value} if $\safety(\mb{x}) = c$ implies $\nabla \safety(\mb{x}) \neq \mb{0}$ for all states $\mb{x} \in \mathcal{X}$. The following result connects regular values and the property in \eqref{eqn:coercive}:

\begin{theorem}
\label{thm:regval}
Suppose that $\safety : \mathcal{X} \to \R$ is twice continuously differentiable with a compact $0$-superlevel set $\C$ and $0$ as a regular value. If $d_\C$ is defined using the 2-norm, then there is an $\varepsilon \in \R_{++}$ such that each $\eta \in \R_{++}$ corresponds to a $\delta \in \R_{++}$ satisfying:
\begin{align}
    d_{\C}(\mb{x}) > \eta \implies \safety (\mb{x}) < - \delta. 
\end{align}
for all states $\mb{x} \in\XcapC{\varepsilon}$. 
\end{theorem}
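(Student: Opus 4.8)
The plan is to reduce the statement to a compactness argument, with the essential hypotheses being the compactness of $\C$ and the continuity of $\safety$ and $d_\C$. First I would fix the radius $\varepsilon$ once and for all, independently of $\eta$. Since $\C$ is compact and contained in the open set $\mathcal{X}$, Lemma~\ref{lem:ballfit} supplies an $\varepsilon \in \R_{++}$ for which $\C \oplus \overline{B}_\varepsilon \subset \mathcal{X}$ and $\C \oplus \overline{B}_\varepsilon$ is compact. With this choice, $\XcapC{\varepsilon} = \C \oplus \overline{B}_\varepsilon$ is itself compact, which is precisely the structure the rest of the argument needs; this step is exactly where the openness of $\mathcal{X}$ forces us to shrink $\varepsilon$ rather than take an arbitrary neighborhood.

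Next, fixing an arbitrary $\eta \in \R_{++}$, I would introduce the set $A_\eta \triangleq \XcapC{\varepsilon} \cap \{ \mb{x} : d_\C(\mb{x}) \geq \eta \}$. Because $d_\C$ is $1$-Lipschitz, hence continuous, the set $\{ \mb{x} : d_\C(\mb{x}) \geq \eta \}$ is closed, so $A_\eta$ is the intersection of a compact set with a closed set and is therefore compact. Every $\mb{x} \in A_\eta$ satisfies $d_\C(\mb{x}) \geq \eta > 0$, so $\mb{x} \notin \C$ (as $\C$ is closed), and since $\C$ is the $0$-superlevel set of $\safety$ we conclude $\safety(\mb{x}) < 0$. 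If $A_\eta$ is empty the implication is vacuous and any $\delta$ suffices; otherwise the extreme value theorem yields $\max_{\mb{x} \in A_\eta} \safety(\mb{x}) = -c$ for some $c \in \R_{++}$. Setting $\delta = c/2$ and noting that every $\mb{x} \in \XcapC{\varepsilon}$ with $d_\C(\mb{x}) > \eta$ lies in $A_\eta$, we obtain $\safety(\mb{x}) \leq -c < -\delta$, as required.

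The only delicate point, and the step I expect to be the real obstacle, is securing compactness of the domain over which $\safety$ is maximized: because $\mathcal{X}$ is open, $\XcapC{\varepsilon}$ need not be closed a priori, and if $\C$ were noncompact one could have $\safety(\mb{x}) \to 0^-$ along a sequence escaping to infinity while remaining at distance greater than $\eta$ from $\C$, which would destroy the uniform bound. Lemma~\ref{lem:ballfit} together with the compactness of $\C$ precisely rules this out. I expect the $C^{2}$ smoothness of $\safety$, the regular-value hypothesis, and the specific choice of the $2$-norm to be inessential for this particular conclusion — they serve to place the statement in the familiar regular-value setting and to support the other required SD-BF candidate properties. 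Should a quantitative strengthening be wanted, I would instead use the regular-value condition to build a collar around the compact boundary hypersurface $\partial\C$, on which $\nabla\safety \neq \mb{0}$, to derive a linear estimate of the form $\safety(\mb{x}) \leq -c\,d_\C(\mb{x})$ near $\partial\C$ and hence $\delta = c\eta$; this refinement is not needed for the stated existence claim.
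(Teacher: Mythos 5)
Your proof is correct, but it takes a genuinely different and more elementary route than the paper's. You reduce everything to compactness: with $\varepsilon$ fixed once by Lemma \ref{lem:ballfit} (so that $\XcapC{\varepsilon} = \C \oplus \overline{B}_\varepsilon$ is compact), the set $A_\eta = (\C \oplus \overline{B}_\varepsilon) \cap \{\mb{x} \,:\, d_{\C}(\mb{x}) \geq \eta\}$ is compact, $\safety < 0$ on it (any such $\mb{x}$ lies outside the $0$-superlevel set), and the extreme value theorem yields a strictly negative maximum $-c$, whence $\delta = c/2$ works; this uses only continuity of $\safety$, compactness of $\C$, and the superlevel-set structure, and, as you correctly observe, the $C^2$ hypothesis, the regular-value condition, and the choice of $2$-norm are not needed for the bare existence claim. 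The paper instead proves a quantitative statement: it sets $\sigma = \min_{\mb{y} \in \partial\C} \Vert \nabla \safety(\mb{y}) \Vert_2 > 0$ (regular value plus compactness of $\partial\C$), takes the nearest point $\mb{y} \in \partial\C$ to $\mb{x}$, uses the fact that $\nabla\safety(\mb{y})$ is anti-parallel to $\mb{x} - \mb{y}$ (this is where the $2$-norm enters), and Taylor-expands to second order with a Hessian bound $\mu$ over the compact collar (this is where $C^2$ enters), obtaining $\safety(\mb{x}) \leq -\tfrac{\sigma}{2} d_{\C}(\mb{x})$ whenever $d_{\C}(\mb{x}) \leq \sigma/\mu$; it then takes $\varepsilon \leq \min\{\varepsilon', \sigma/\mu\}$ and any $\delta < \sigma\eta/2$. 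In other words, the paper's proof is precisely the ``quantitative strengthening'' you sketch in your final sentence, with $c = \sigma/2$. What your argument buys is brevity and generality: property \eqref{eqn:coercive} holds for any continuous function with compact $0$-superlevel set. What the paper's argument buys is an explicit linear dependence of $\delta$ on $\eta$ with constants tied to the geometry of $\safety$, which is really the point of the theorem as framed in the text (connecting regular values to \eqref{eqn:coercive}): without the regular-value hypothesis, $\safety$ may flatten near $\partial\C$ (e.g., behave like $-d_{\C}^2$), so the linear estimate fails even though your existence statement still holds.
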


\section{Practical Safety}
\label{sec:pracsafe}
In this section we present our main contribution by establishing how a family of SD-BFs for an approximate family can be used to ensure the practical safety of the exact family of controller-map pairs via one-step consistency:
\begin{theorem}\label{thm:pracsafe}
Consider a set $\C\subseteq\mathcal{X}$ and a family of admissible controllers $\{ \mb{k}_h ~|~ h \in I \}$. Suppose that:
\begin{enumerate}
    \item There exists a family of Sampled-Data Barrier Functions on $\C$ for a family $\{ (\mb{k}_h, \mb{F}_h) ~|~ h \in I \}$.
    \item There exists an $\varepsilon'\in\R_{++}$ such that the family $\{ (\mb{k}_h, \mb{F}_h) ~|~ h \in I \}$ is one-step consistent with the exact family $\{ (\mb{k}_h, \mb{F}^e_h) ~|~ h \in I \}$ over the set $\mathcal{X}\cap(\C\oplus\overline{B}_{\varepsilon'})$.
\end{enumerate}
Then the exact family $\{ (\mb{k}_h, \mb{F}^e_h) ~|~ h \in I \}$ is practically safe with respect to $\C$.
\end{theorem}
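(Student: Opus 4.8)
The plan is to manufacture, for each target radius $R$, an enlarged collar of $\C$ cut out by the barrier function itself and to prove that collar is forward invariant for the \emph{exact} map. Since practical safety need only be verified for small $R$ (the remark after the definition shows larger radii are automatic), I assume $R < \min(\varepsilon,\varepsilon')$, where $\varepsilon$ is the radius in the SD-BF candidate definition and $\varepsilon'$ comes from one-step consistency. Applying the coercivity property \eqref{eqn:coercive} with $\eta = R$ produces a $\delta \in \R_{++}$; I fix the $h$-independent level $c = \delta/2$ and set
\[
  \C_h \triangleq \{\, \mb{x} \in \mathcal{X} \cap (\C \oplus \overline{B}_\varepsilon) ~:~ \safety_h(\mb{x}) \geq -c \,\}.
\]

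The two inclusions are quick: $\C \subseteq \C_h$ because $\safety_h \ge 0$ on $\C$ by \eqref{eqn:bfvals}; and $\C_h \subseteq \C \oplus \overline{B}_R$ because every $\mb{x}\in\C_h$ satisfies $\safety_h(\mb{x}) \ge -c > -\delta$, so the contrapositive of \eqref{eqn:coercive} forces $d_{\C}(\mb{x}) \le R$ (here I use that $\C$ is closed, so $d_{\C}(\mb{x})\le R$ is equivalent to $\mb{x}\in\C\oplus\overline{B}_R$).

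For forward invariance, fix $\mb{x}\in\C_h$ and write $\mb{x}^+_e = \mb{F}^e_h(\mb{x},\mb{k}_h(\mb{x}))$ and $\mb{x}^+_a = \mb{F}_h(\mb{x},\mb{k}_h(\mb{x}))$. Since $\mb{k}_h$ is $h$-admissible, $\mb{x}^+_e \in \mathcal{X}$ and the exact recursion is well defined. Chaining the SD-BF inequality \eqref{eqn:dt_bf_condition}, the Lipschitz bound \eqref{eqn:safelip}, and one-step consistency \eqref{eqn:one-step-cons} gives, with $s \triangleq \safety_h(\mb{x}) \ge -c$,
\[
  \safety_h(\mb{x}^+_e) \;\ge\; \safety_h(\mb{x}^+_a) - M h\rho(h) \;\ge\; s - h\alpha(s) - M h \rho(h).
\]
It remains to show the right-hand side is $\ge -c$. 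I split on $s$: if $s\ge 0$ then $s - h\alpha(s)\ge 0$ by \eqref{eqn:class-k}; if $-c/2 \le s < 0$ then $s - h\alpha(s) \ge s \ge -c/2$; and if $-c \le s < -c/2$ then, by strict monotonicity of $\alpha \in \mathcal{K}^e$, $s - h\alpha(s) \ge -c + h(-\alpha(-c/2))$ with $-\alpha(-c/2) > 0$. Because $\rho\in\classKinfty$ gives $\rho(h)\to 0$ as $h\to 0^+$, I can shrink $h^*$ so that $M h\rho(h) \le c/2$ and $M\rho(h) \le -\alpha(-c/2)$ for all $h\in(0,h^*)$; then all three cases yield $\safety_h(\mb{x}^+_e) \ge -c$, i.e.\ $\mb{x}^+_e \in \C_h$, and induction on $k$ establishes forward invariance. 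Combined with the inclusions, this is exactly practical safety.

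The step I expect to fight hardest is a domain technicality hidden in the chain above: \eqref{eqn:safelip} and \eqref{eqn:coercive} are asserted only on $\mathcal{X}\cap(\C\oplus\overline{B}_\varepsilon)$, so before invoking the Lipschitz bound at the image I must first know $\mb{x}^+_a,\mb{x}^+_e\in\C\oplus\overline{B}_\varepsilon$ --- which looks circular with the conclusion. I would break the circularity by controlling a single step: on the compact set $\C\oplus\overline{B}_\varepsilon$ (compact via Lemma \ref{lem:ballfit}, using compactness of $\C$) the continuous maps $\mb{f},\mb{g}$ are bounded, so together with a uniform bound on $\mb{k}_h$ over the collar --- the same regularity underlying Theorem \ref{thm:onestep} --- one gets $\|\mb{x}^+_a-\mb{x}\|,\|\mb{x}^+_e-\mb{x}\| = O(h)$; since $\mb{x}\in\C\oplus\overline{B}_R$ with $R<\varepsilon$, a further shrinking of $h^*$ keeps the image in $\C\oplus\overline{B}_\varepsilon$. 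This containment bookkeeping, rather than the barrier-margin computation, is where I expect the real care to be required.
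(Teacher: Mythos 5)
Your core construction and margin computation are sound, and they follow essentially the same route as the paper's proof: a sublevel-set collar of $\safety_h$ serves as $\C_h$, forward invariance is obtained by chaining the decrement condition \eqref{eqn:dt_bf_condition}, the Lipschitz bound \eqref{eqn:safelip}, and one-step consistency \eqref{eqn:one-step-cons}, and $h^*$ is shrunk until $Mh\rho(h)$ and $M\rho(h)$ are dominated by the barrier margins. Your decision to split cases on the \emph{value} $s = \safety_h(\mb{x})$ rather than on the distance $d_\C(\mb{x})$ is a genuine streamlining: the paper needs a second application of coercivity (producing $\Delta$ in \eqref{eqn:bigboi}) and the auxiliary estimate $\safety_h(\mb{x}) \geq -M d_\C(\mb{x})$ to convert distance information into value information, whereas you obtain the uniform positive increment $-h\alpha(-c/2)$ directly from monotonicity of $\alpha \in \mathcal{K}^e$. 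One small repair: Theorem \ref{thm:pracsafe} does not assume $\C$ is closed, so to conclude $\C_h \subseteq \C \oplus \overline{B}_R$ you should invoke \eqref{eqn:coercive} with $\eta = R/2$ (leaving slack, as the paper does with $R'/2$) rather than $\eta = R$ plus closedness of $\C$.

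The genuine gap is in your proposed resolution of the containment bookkeeping, which you correctly identify as the delicate step. Your fix bounds $\| \mb{x}^+_a - \mb{x} \|$ and $\| \mb{x}^+_e - \mb{x} \|$ by $O(h)$ using compactness of $\C$ (via Lemma \ref{lem:ballfit}), boundedness of $\mb{f}$, $\mb{g}$ on the collar, and a uniform bound on $\mb{k}_h$ --- but none of these are hypotheses of Theorem \ref{thm:pracsafe}. Compactness of $\C$ is never assumed; uniform bounds on $\mb{k}_h$ and regularity of $\mb{f}$, $\mb{g}$ are hypotheses of Theorem \ref{thm:onestep}, used to \emph{establish} consistency, but here consistency is assumed abstractly and $\{\mb{F}_h\}$ need not even be a Runge-Kutta family, so no such bounds are available to you. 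The paper places the images using only what the theorem grants: the approximate image $\mb{x}^+_a = \mb{F}_h(\mb{x},\mb{k}_h(\mb{x}))$ is located by the barrier inequality itself ($\safety_h(\mb{x}^+_a) \geq 0$ in its Case 1, hence $\mb{x}^+_a \in \C$; $\safety_h(\mb{x}^+_a) > -\delta$ in its Cases 2--3, placing $\mb{x}^+_a$ in the superlevel set $\Omega_{-\delta,h}$, which coercivity keeps within $R'/2$ of $\C$); the exact image $\mb{x}^+_e$ lies in $\mathcal{X}$ because $\mb{k}_h$ is $h$-admissible; and consistency --- applied at the current point $\mb{x}$, which \emph{is} in the collar --- gives $\| \mb{x}^+_e - \mb{x}^+_a \| \leq h\rho(h)$, so imposing $h\rho(h) \leq \min\{R'/2, \varepsilon\}$ lands $\mb{x}^+_e$ in $\XcapC{\varepsilon}$, where \eqref{eqn:safelip} may legitimately be applied to the pair $(\mb{x}^+_e, \mb{x}^+_a)$. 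Substituting this mechanism for your $O(h)$ argument (and keeping your value-based case analysis) yields a complete proof from the stated hypotheses.
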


\begin{proof}
Let $h_1^*$, $\alpha$, $\varepsilon$, and $M$ be defined as in Definition \ref{def:todobarfs}. By assumption, there exists an $h_2^*\in I$ and $\rho\in\mathcal{K}_\infty$ such that \eqref{eqn:one-step-cons} holds for all $\mb{x}\in\mathcal{X}\cap(\C\oplus\overline{B}_{\varepsilon'})$ and $h\in(0,h_2^*)$. Since the family of controllers is assumed to be admissible, there is an $h_3^* \in I$ such that $\mb{k}_h$ is $h$-admissible for each $h \in (0, h_3^*)$.

Let $R \in \R_{++}$, and pick $R'\in \R_{++}$ such that $R' \leq \min \{\varepsilon, \varepsilon', R\}$. By \eqref{eqn:coercive}, there exist $\delta,\Delta \in \R_{++} $ such that: 
\begin{align}
    d_\mathcal{C}(\mb{x}) > R'/2 &\implies s_h(\mb{x}) < - \delta , \\\label{eqn:bigboi}
    d_\mathcal{C}(\mb{x}) > \delta/(2M) &\implies s_h(\mb{x}) < - \Delta, 
\end{align}
for all $\mb{x} \in \XcapC{\varepsilon}$ and $h\in(0,h_1^*)$. Fix $h \in I$ with $h < \min{\{ h_1^*, h_2^*, h_3^* \}}$. For any $c \in \R$, we denote the $c$-superlevel set of $\safety_h $ as:
\begin{align}
    \Omega_{c,h} =  \{ \mb{x} \in \mathcal{X} ~|~ \safety_h(\mb{x}) \geq c\}.
\end{align}
For any state $\mb{x} \in \Omega_{-\delta, h}$, we have $d_\C(\mb{x}) \leq R' / 2$, and thus $\C \subseteq \Omega_{-\delta, h}\subseteq \XcapC{R'/2} \subseteq \mathcal{C} \oplus \overline{B}_R$.

We will prove that for small enough $h$, the set $\Omega_{-\delta,h}$ is forward invariant for the controller-map pair $(\mb{k}_h,\mb{F}_h^e)$. We denote three cases (see Fig. \ref{fig:sweet_bean}), considering a state $\mb{x} \in \mathcal{X}$ such that either \begin{inparaenum}[\bfseries (1)] \item $\mb{x} \in \mathcal{C}$, \item $\mb{x} \in \Omega_{-\delta, h }\setminus\C $ and $d_\mathcal{C}(\mb{x}) \leq \delta/(2M) $, or \item $\mb{x} \in \Omega_{-\delta, h }\setminus\C$ and $d_\mathcal{C}(\mb{x}) > \delta/(2M)$. \end{inparaenum}

\textbf{Case 1: }
Suppose $\mb{x} \in \mathcal{C}$. From \eqref{eqn:dt_bf_condition} and \eqref{eqn:class-k}, we have:
\begin{equation}
    \safety_h(\mb{F}_h(\mb{x}, \mb{k}_h(\mb{x}))) - \safety_h(\mb{x}) \geq -h\alpha(\safety_h(\mb{x})) \geq -\safety_h(\mb{x}),
\end{equation}
so $\safety_h(\mb{F}_h(\mb{x}, \mb{k}_h(\mb{x}))) \geq 0$, or $\mb{F}_h(\mb{x}, \mb{k}_h(\mb{x})) \in \mathcal{C}$. By one-step consistency, we have:
\begin{equation}
    \| \mb{F}_h^e(\mb{x}, \mb{k}_h(\mb{x})) - \mb{F}_h(\mb{x}, \mb{k}_h(\mb{x})) \| \leq h\rho(h),
\end{equation}
so if $h\rho(h) \leq \varepsilon$, then $\mb{F}_h^e(\mb{x}, \mb{k}_h(\mb{x}))\in \XcapC{\varepsilon}$. Thus:
\begin{equation}
    | \safety_h(\mb{F}_h^e(\mb{x}, \mb{k}_h(\mb{x}))) - \safety_h(\mb{F}_h(\mb{x}, \mb{k}_h(\mb{x}))) | \leq Mh\rho(h),
\end{equation}
and if $Mh\rho(h) \leq \delta$ as well, then:
\begin{equation}
    \safety_h(\mb{F}_h^e(\mb{x}, \mb{k}_h(\mb{x}))) \geq \safety_h(\mb{F}_h(\mb{x}, \mb{k}_h(\mb{x}))) - Mh\rho(h) \geq -\delta,
\end{equation}
giving us $\mb{F}_h^e(\mb{x}, \mb{k}_h(\mb{x})) \in \Omega_{-\delta, h}$. The analysis of this case gives us the requirement $h\rho(h) \leq \min\{ \varepsilon, \delta / M \}$.

\begin{figure}[b]
    \centering
    \includegraphics[width=\linewidth]{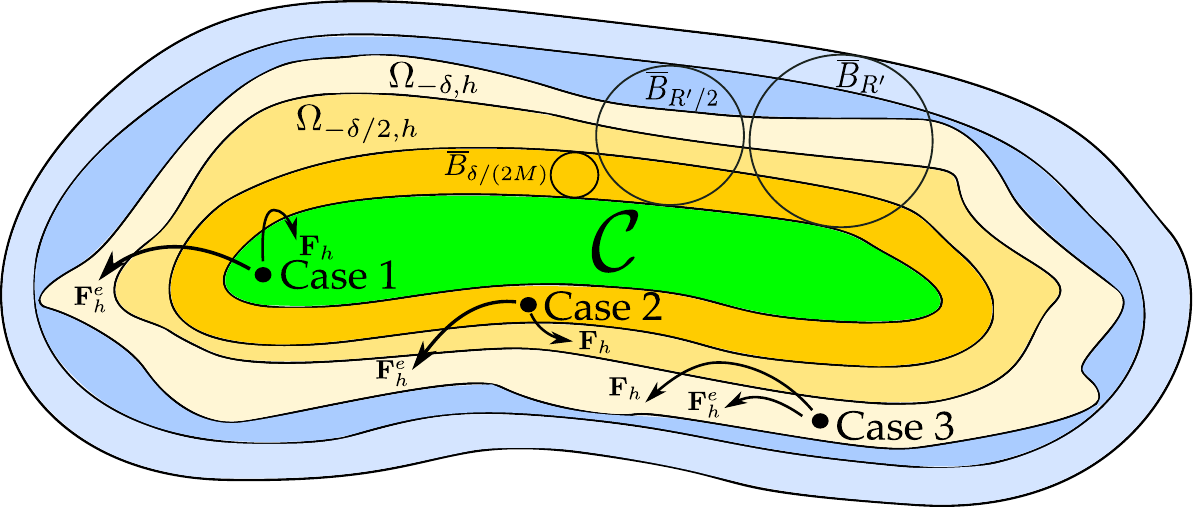}
    \caption{A visual representation of the main sets and three cases discussed in the proof of Theorem \ref{thm:pracsafe}.} 
    \label{fig:sweet_bean}
\end{figure}

Before continuing to cases 2 and 3, we establish some additional properties. First, note that the superlevel sets have the containment property  $\Omega_{-\delta/2, h} \subseteq \Omega_{-\delta,  h}$. Next, for any $\eta \in\R_{++} $ and any $\mb{x} \in \XcapC{\varepsilon}$ with $\mb{x} \notin \mathcal{C}$, there is a state $\mb{y} \in\mathcal{C}$ such that $\Vert \mb{x} - \mb{y} \Vert  < d_{\mathcal{C}}(\mb{x}) + \eta$. Therefore: 
\begin{align}
    \safety_h(\mb{x}) \geq \safety_h(\mb{y}) - M\Vert \mb{x} - \mb{y} \Vert  \geq - Md_\mathcal{C}(\mb{x}) - M\eta,
\end{align}
since $\safety_h(\mb{y}) \geq 0 $. Since $\eta$ can be chosen arbitrarily small, we have $\safety_h(\mb{x}) \geq -Md_\C(\mb{x})$. If $d_\mathcal{C}(\mb{x}) \leq \delta/(2M) $, then $\safety_h(\mb{x}) \geq -\delta/2$, so $\XcapC{\delta/(2M) } \subseteq \Omega_{-\delta/2, h}\subseteq \Omega_{-\delta,h}$. 

Next, consider $\mb{x} \in \Omega_{-\delta,h} \setminus \C $. Since $\mb{x} \notin\C$, meaning $\safety_h(\mb{x}) < 0$ and thus $\alpha(\safety_h(\mb{x})) < 0$, we have from \eqref{eqn:dt_bf_condition} that: \begin{align}
    \safety_h(\mb{F}_h(\mb{x}, \mb{k}_h(\mb{x}))) \geq \safety_h(\mb{x}) - h\alpha(\safety_h(\mb{x}) )  > - \delta. 
\end{align}
Thus $\mb{F}_h(\mb{x}, \mb{k}_h(\mb{x})) \in \Omega_{-\delta, h }  \subseteq \XcapC{R'/2}$ so we can apply one step consistency to achieve:
\begin{align}
    \Vert \mb{F}^e_h(\mb{x}, \mb{k}_h(\mb{x})) - \mb{F}_{h}(\mb{x}, \mb{k}_h(\mb{x}))\Vert \leq h\rho(h).
\end{align}
If $h \rho(h) \leq R'/2$, then $\mb{F}_h^e(\mb{x}, \mb{k}_h(\mb{x})) \in \XcapC{R'}$, in which case the Lipschitz property of $\safety_h$ yields the bound: 
\begin{align}\label{eqn:forward-lip}
    \vert \safety_h(\mb{F}^e_h(\mb{x}, \mb{k}_h(\mb{x}) )) - \safety_h(\mb{F}_h(\mb{x}, \mb{k}_h(\mb{x}) )) \vert \leq M h \rho (h).
\end{align}
Note that because $R' / 2 < \varepsilon$, the requirement from Case 1 can be replaced by $h\rho(h) \leq \min{\{ R' / 2, \delta / M \}}$.

\textbf{Case 2: }
Suppose $\mb{x} \in \Omega_{-\delta, h}\setminus\C$ and $d_{\mathcal{C}}(\mb{x}) \leq \delta/(2M)$. Since $\mb{x} \not\in \C$ and $\XcapC{\delta / (2M)} \subseteq \Omega_{-\delta / 2, h}$, we have $-\delta / 2 \leq \safety_h(\mb{x}) < 0$. Therefore:
\begin{equation}
    \safety_h(\mb{F}_h(\mb{x}, \mb{k}_h(\mb{x}))) \geq \safety_h(\mb{x}) - h\alpha(\safety_h(\mb{x}) ) \geq -\delta / 2,
\end{equation}
so $\mb{F}_h(\mb{x}, \mb{k}_h(\mb{x})) \in \Omega_{-\delta / 2, h}$. By adding and subtracting $\safety_h(\mb{F}_h^e(\mb{x}, \mb{k}_h(\mb{x})))$ and using \eqref{eqn:forward-lip}, we have:
\begin{equation}
    \safety_h(\mb{F}_h^e(\mb{x}, \mb{k}_h(\mb{x}))) \geq -Mh\rho(h) - \delta / 2,
\end{equation}
when $h\rho(h) \leq R' / 2$. If $Mh\rho(h) \leq \delta / 2$ as well, then $\safety_h(\mb{F}_h^e(\mb{x}, \mb{k}_h(\mb{x}))) \geq -\delta$, or $\mb{F}_h^e(\mb{x}, \mb{k}_h(\mb{x}))\in \Omega_{-\delta, h}$. Thus we update the requirements to be $h\rho(h) \leq \min{\{ R' / 2, \delta / (2M) \}}$.

\begin{figure*}[t]
    \centering
    \includegraphics[width=0.95\linewidth]{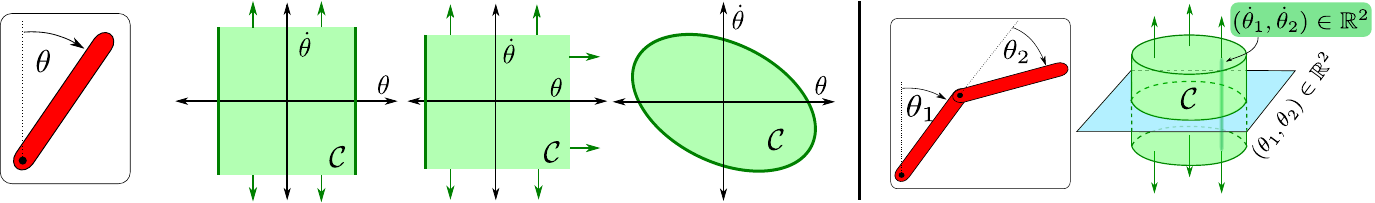}
    \caption{The single (left) and double (right) inverted pendulums and their safe sets. The inverted pendulum safe sets are, from left to right, a configuration ellipsoid, a halfspace, and a Lyapunov sublevel set. The far right image shows the safe set for the double inverted pendulum where $(\theta_1, \theta_2)\in\R^2$ are constrained to an ellipse and no constraint is placed on $(\dot{\theta}_1, \dot{\theta}_2) \in \R^2 $ which are represented by vertical fibers. }
    \label{fig:safesets}
\end{figure*}

\begin{table*}[t]
    \centering
    \begin{tabular}{|c|c|c|c|c|}
    \hline 
        System & $\mb{q}$ & $\mb{D}: \R^m \to \mathbb{S}^m_{++}$ & $\mb{C}: \R^m \times \R^m \to \R^{m\times m }$ & $\mb{G}:\R^m \to \R^m $  \\
    \hline
        Single & $\theta$ & 1 & 0 & $-\sin\theta$\\
    \hline 
        Double & $\begin{bmatrix} \theta_1 \\ \theta_2 \end{bmatrix}$ &
        $\lmat 3 + 2\cos\theta_2 & 1+ \cos\theta_2  \\
        1 + \cos\theta_2 & 1 \rmat$
        & 
        $\lmat 0 & -(2\dot{\theta}_1 + \dot{\theta}_2)\sin\theta_2\\
        \frac{1}{2}(2\dot{\theta}_1 +  \dot{\theta}_2)\sin\theta_2
        & -\frac{1}{2} \dot{\theta}\sin\theta_2
        \rmat 
        $
        & 
        $
        \lmat -2\sin\theta_1 - \sin(\theta_1 + \theta_2) \\
        -\sin(\theta_1 + \theta_2) 
        \rmat $\\
    \hline
    \end{tabular}
    \caption{Terms in pendulum dynamics given by \ref{eqn:lagrange_dyn}. Angles are taken clockwise from upright, and $\theta_2$ is taken relative to $\theta_1$.}
    \label{fig:results}
    \vspace{-0.6cm}
    \label{table:dynamics}
\end{table*}

\textbf{Case 3: }
Suppose $\mb{x} \in \Omega_{-\delta, h}\setminus\C$ and $d_\mathcal{C}(\mb{x}) > \delta/(2M)$. From \eqref{eqn:bigboi}, we have:
\begin{equation}
    \safety_h(\mb{F}_h(\mb{x}, \mb{k}_h(\mb{x}))) - \safety_h(\mb{x}) > -h\alpha(-\Delta).
\end{equation}
Adding and subtracting $\safety_h(\mb{F}_h^e(\mb{x}, \mb{k}_h(\mb{x})))$ and \eqref{eqn:forward-lip} yield:
\begin{align}
    \safety_h(\mb{F}_h^e(\mb{x}, \mb{k}_h(\mb{x}))) &> \safety_h(\mb{x}) -Mh\rho(h) - h\alpha(-\Delta), \\
    &= \safety_h(\mb{x}) - h(M\rho(h) + \alpha(-\Delta)),
\end{align}
when $h\rho(h) \leq R' / 2$. If $M\rho(h) \leq -\alpha(-\Delta)$ as well, then $\safety_h(\mb{F}_h^e(\mb{x}, \mb{k}_h(\mb{x}))) > \safety_h(\mb{x}) \geq -\delta$, or $\mb{F}_h^e(\mb{x}, \mb{k}_h(\mb{x})) \in \Omega_{-\delta, h}$. 

To conclude, if both:
\begin{enumerate}
    \item $h < \min{\left\{ h_1^*, h_2^*, h_3^*, \rho^{-1}( -\alpha(-\Delta) / M ) \right\}}$,
    \item $h\rho(h) \leq \min{\{ R' / 2, \delta / (2M) \}}$,
\end{enumerate}
then the set $\C_h\triangleq\Omega_{-\delta, h}\subseteq \C\oplus\overline{B}_R$ is forward invariant for the controller-map pair $(\mb{k}_h,\mb{F}^{e}_h)$, and thus the family $\{(\mb{k}_h,\mb{F}^e_h)~|~h\in I\}$ is practically safe with respect to $\C$.
\end{proof}

\section{Control Synthesis \& Simulation}
\label{sec:synthesis}
In this section we explore convexity of the CBF decrement condition, and define an optimization-based controller via an SD-CBF for achieving practical safety, which we deploy in simulation on inverted and double inverted pendulums.

The following result establishes how for a system with a block integrator structure, a Runge-Kutta approximation family of maps of the appropriate order can preserve a convexity property of a family $\{\safety_h ~|~ h\in I\}$:


\begin{theorem} Consider $\ell, \gamma, q \in \N$ such that $n = \ell  \gamma$ and $q\leq\gamma$. Suppose the system dynamics have the form: \label{thm:convexity}
\begin{equation}
    \dot{\mb{x}} = \underbrace{\begin{bmatrix} 
        \mb{0} & \mb{I} && \\
        & \ddots & \ddots\\
        && \mb{0} & \mb{I} \\
        &&& \mb{0}
    \end{bmatrix}}_{\mb{A}} \mb{x} + \underbrace{\begin{bmatrix} \mb{0} \\ \vdots \\ \mb{0} \\ \mb{f}_\gamma(\mb{x}) + \mb{g}_\gamma(\mb{x})\mb{u}  \end{bmatrix}}_{\mb{r}(\mb{x},\mb{u})}, \label{eq:integrator_dyn}
\end{equation}
where $\mb{f}_\gamma: \R^n \to \R^\ell$ and $\mb{g}_\gamma: \R^n \to \R^{\ell \times m}$. For each $h \in I$, consider a function $\safety_h: \R^n \to \R$, and suppose there exists a function $\tilde{\safety}_h: (\R^\ell)^q \to \R$ satisfying:
\begin{equation}
\label{eqn:stilde}
    \safety_h(\mb{x}) = \tilde{\safety}_h(\bs{\zeta}_1, \dots, \bs{\zeta}_q),
\end{equation}
for all $\mb{x} = (\bs{\zeta}_1, \dots, \bs{\zeta}_\gamma) \in (\R^{\ell})^\gamma \simeq \R^n$. If the function $\tilde{\safety}_h$ is concave with respect to its last argument and $p = \gamma - q+1$, then for $\alpha \in \mathcal{K}^e$, the function $\phi_h:\mathcal{Z}\to\R$ defined as:
\begin{align}
\label{eqn:cvxcnstrnt}
    \phi_h(\mb{x},\mb{u}) = -\safety_h(\mb{F}_h^{a,p}(\mb{x}, \mb{u})) + \safety_h(\mb{x}) - h \alpha(\safety_h(\mb{x})),
\end{align} 
is convex in its second argument.
\end{theorem}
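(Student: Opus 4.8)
The plan is to isolate the only term of $\phi_h$ that depends on $\mb{u}$. Since $\safety_h(\mb{x})-h\alpha(\safety_h(\mb{x}))$ is constant in $\mb{u}$, convexity of $\phi_h$ in its second argument is equivalent to \emph{concavity} of $\mb{u}\mapsto\safety_h(\mb{F}_h^{a,p}(\mb{x},\mb{u}))$. By \eqref{eqn:stilde}, this map equals $\tilde{\safety}_h$ evaluated at the first $q$ blocks $(\bs{\zeta}_1,\dots,\bs{\zeta}_q)$ of $\mb{F}_h^{a,p}(\mb{x},\mb{u})$. So the whole result reduces to one structural claim: among the first $q$ blocks of $\mb{F}_h^{a,p}(\mb{x},\mb{u})$, blocks $1,\dots,q-1$ are independent of $\mb{u}$ and block $q$ is \emph{affine} in $\mb{u}$. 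Granting this, the block map $\mb{u}\mapsto(\bs{\zeta}_1,\dots,\bs{\zeta}_q)$ is affine with only its last component varying, and since $\tilde{\safety}_h$ is concave in its last argument, the composition of a concave function with an affine map is concave; adding the $\mb{u}$-independent terms preserves concavity, giving convexity of $\phi_h$.

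The core of the argument is an induction over the Runge–Kutta stages $\mb{z}_1,\dots,\mb{z}_p$ that tracks, block by block, how $\mb{u}$-dependence propagates. Writing $\mb{z}_i=(\bs{\zeta}^{(i)}_1,\dots,\bs{\zeta}^{(i)}_\gamma)$ and using that $\mb{f}(\mb{z})+\mb{g}(\mb{z})\mb{u}=\mb{A}\mb{z}+\mb{r}(\mb{z},\mb{u})$ with $\mb{r}$ supported only on the last block, I would prove the invariant: in $\mb{z}_i$, every block $\bs{\zeta}^{(i)}_j$ with $j<\gamma-i+2$ is constant in $\mb{u}$, and the block $\bs{\zeta}^{(i)}_{\gamma-i+2}$ is affine in $\mb{u}$. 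The base case $\mb{z}_1=\mb{x}$ is immediate. For the step, block $k$ of $\mb{z}_i$ collects the shift contributions $(\mb{A}\mb{z}_j)_k=\bs{\zeta}^{(j)}_{k+1}$ over earlier stages $j<i$, plus the last-block term $\mb{f}_\gamma(\mb{z}_j)+\mb{g}_\gamma(\mb{z}_j)\mb{u}$ only when $k=\gamma$. The shift advances the affine ``frontier'' up exactly one block per stage, while the nonlinear dependence injected through $\mb{f}_\gamma(\mb{z}_j),\mb{g}_\gamma(\mb{z}_j)$ (nonlinear once $j\geq2$, since $\mb{z}_j$ then depends on $\mb{u}$) is confined to blocks strictly above this frontier. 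The induction closes by checking the index inequalities $k+1<\gamma-j+2$ (constant) versus $k+1=\gamma-j+2$ (affine), using $j\leq i-1$ to rule out any nonlinear contribution reaching the frontier block.

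With the invariant established, I would evaluate $\mb{F}_h^{a,p}(\mb{x},\mb{u})=\mb{x}+h\sum_i b_i(\mb{A}\mb{z}_i+\mb{r}(\mb{z}_i,\mb{u}))$ blockwise and invoke $p=\gamma-q+1$. For $k\leq q-1$, block $k$ only sees $\bs{\zeta}^{(i)}_{k+1}$ with $k+1\leq q<\gamma-i+2$ for every $i\leq p$, so it is constant; for $k=q$, the single $\mb{u}$-dependent contribution is $\bs{\zeta}^{(p)}_{q+1}$, which lies exactly at the affine frontier $q+1=\gamma-p+2$ and is therefore affine, with the boundary case $q=\gamma$ (forcing $p=1$) handled directly since $\mb{F}_h^{a,1}$ is affine in $\mb{u}$. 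This is precisely the affine block structure required above. I expect the main obstacle to be the bookkeeping in the induction, namely verifying that the affine frontier advances exactly one block per stage while nonlinearity never catches up to it, rather than the concluding convexity step, which is the standard composition rule. It is worth emphasizing that $p=\gamma-q+1$ is exactly the value aligning the frontier of the final stage with block $q$: a larger $p$ would let nonlinear $\mb{u}$-dependence contaminate block $q$ and destroy concavity.
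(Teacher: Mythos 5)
Your proposal is correct and follows essentially the same route as the paper's proof: an induction over the Runge--Kutta stages exploiting the block-shift structure of $\mb{A}$ and the last-block support of $\mb{r}$, establishing that blocks $1, \dots, q-1$ of $\mb{F}_h^{a,p}$ are $\mb{u}$-independent while block $q$ receives only the affine first-stage term $\mb{r}(\mb{x},\mb{u})$ when $p = \gamma - q + 1$, and then concluding via the concave-with-affine composition rule. The only difference is bookkeeping: the paper tracks the degrees of the polynomials in $\mb{A}$ multiplying each $\mb{r}(\mb{z}_j,\mb{u})$ (a degree-$d$ polynomial reaches down to block $\gamma - d$), whereas you track the affine frontier index $\gamma - i + 2$ directly, which is an equivalent formulation of the same mechanism.
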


Safety-critical controllers are frequently synthesized using Control Barrier Functions and convex optimization (typically quadratic programs) \cite{ames2019control}. The following result highlights how we may similarly synthesize a family of controllers that achieve practical safety through optimization:

\begin{theorem}
\label{thm:optprob}
Let $\{\safety_h ~|~ h \in I \}$ be a family of SD-CBFs on $\C$ for a family $\{\mb{F}^{a,p}_h~|~ h\in I\}$ such that the set:
\begin{equation}
    \mathcal{F}(\mb{x}) = \{\mb{u}\in\R^m ~|~ (\mb{x},\mb{u})\in\mathcal{Z} ~\mathrm{and}~ \phi_h(\mb{x},\mb{u})\leq 0\},
\end{equation}
is closed and convex for each $h\in I$ and $\mb{x}\in\mathcal{X}$. Consider a set of controllers $\{\mb{k}_h~|~ h\in I\}$ satisfying:
\begin{align}
    \mb{k}_h (\mb{x}) = \argmin_{\mb{u} \in \R^m } &~ \frac{1}{2}\Vert \mb{u}-\mb{k}_d(\mb{x}) \Vert_2^2 \tag{SD-CBF-OP} \label{eq:sdcbf Controller} \\
    \mathrm{s.t. } &~ \safety_h(\mb{F}_h^{a,p}(\mb{x}, \mb{u})) - \safety_h (\mb{x}) \geq -h \alpha (\safety_h (\mb{x}) ), \nonumber
\end{align}
for each $\mb{x}\in\mathcal{X}$ and $h\in (0, h^*)$, where $\mb{k}_d:\mathcal{X}\to\R^m$ is a nominal controller. If $\{\mb{k}_h~|~ h\in I\}$ is a family of admissible controllers, then $\{\safety_h ~|~ h\in I\}$ is a family of Sampled-Data Barrier Functions on $\C$ for $\{(\mb{k}_h,\mb{F}^{a,p}_h)~|~ h\in I\}$.
\end{theorem}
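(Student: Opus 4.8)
The plan is to verify the two defining requirements of a family of Sampled-Data Barrier Functions for $\{(\mb{k}_h, \mb{F}_h^{a,p}) ~|~ h \in I\}$: that $\{\safety_h ~|~ h \in I\}$ is a family of Sampled-Data Barrier Function Candidates, and that the decrement inequality \eqref{eqn:dt_bf_condition} holds along the closed-loop maps. The first requirement is immediate, since by hypothesis $\{\safety_h\}$ is a family of SD-CBFs, and any family of SD-CBFs is by definition a family of SD-BF Candidates; thus all of the properties of Definition \ref{def:todobarfs} are inherited at once. The substance of the proof therefore lies in establishing that the controller $\mb{k}_h$ defined by \eqref{eq:sdcbf Controller} is well-posed and that its output satisfies \eqref{eqn:dt_bf_condition}.

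First I would confirm that \eqref{eq:sdcbf Controller} has a unique solution at every $\mb{x} \in \mathcal{X}$ and $h \in (0, h^*)$, so that $\mb{k}_h$ is a genuine function. Because $\{\safety_h\}$ is a family of SD-CBFs, for each such $\mb{x}$ and $h$ there is an input $\mb{u}$ with $(\mb{x}, \mb{u}) \in \mathcal{Z}$ satisfying the SD-CBF inequality \eqref{eqn:dt_cbf_condition}; by the definition of $\phi_h$ in \eqref{eqn:cvxcnstrnt} this is precisely the statement $\phi_h(\mb{x}, \mb{u}) \leq 0$, so the feasible set $\mathcal{F}(\mb{x})$ is non-empty. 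By assumption $\mathcal{F}(\mb{x})$ is closed and convex, while the objective $\tfrac{1}{2}\Vert \mb{u} - \mb{k}_d(\mb{x}) \Vert_2^2$ is strictly convex and coercive in $\mb{u}$; minimization of a strictly convex coercive function over a non-empty closed convex set then yields a unique minimizer, so $\mb{k}_h(\mb{x})$ is well-defined.

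Next I would extract the barrier inequality from feasibility of the optimizer. Since $\mb{k}_h(\mb{x})$ is the minimizer, it lies in $\mathcal{F}(\mb{x})$, and in particular $\phi_h(\mb{x}, \mb{k}_h(\mb{x})) \leq 0$. Substituting $\mb{F}_h^{a,p}$ into the definition \eqref{eqn:cvxcnstrnt} and rearranging turns this directly into
\begin{equation}
    \safety_h(\mb{F}_h^{a,p}(\mb{x}, \mb{k}_h(\mb{x}))) - \safety_h(\mb{x}) \geq -h \alpha(\safety_h(\mb{x})),
\end{equation}
which is exactly the SD-BF decrement condition \eqref{eqn:dt_bf_condition} for the pair $(\mb{k}_h, \mb{F}_h^{a,p})$ and holds for every $\mb{x} \in \mathcal{X}$ and $h \in (0, h^*)$. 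Combined with the SD-BF Candidate property and the admissibility of $\{\mb{k}_h\}$ assumed in the statement, this establishes that $\{\safety_h\}$ is a family of Sampled-Data Barrier Functions on $\C$ for $\{(\mb{k}_h, \mb{F}_h^{a,p})\}$.

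I do not anticipate a serious obstacle: the result is essentially the observation that the optimization constraint in \eqref{eq:sdcbf Controller} coincides verbatim with the barrier decrement condition, so any feasible minimizer automatically certifies the barrier property. The only point demanding care is the well-posedness argument — confirming that feasibility supplied by the SD-CBF hypothesis, together with closedness, convexity, and coercivity, guarantees a unique minimizer — which I would note is consistent with the standing assumption that $\{\mb{k}_h\}$ is a family of admissible controllers, a hypothesis that already presupposes each $\mb{k}_h$ is a well-defined map $\mathcal{X} \to \R^m$.
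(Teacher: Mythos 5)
Your proof is correct and follows essentially the same route as the paper's: use the SD-CBF property to show $\mathcal{F}(\mb{x})$ is non-empty, use closedness, convexity, and strict convexity/coercivity of the cost to get a unique minimizer (the paper proves this inline by intersecting $\mathcal{F}(\mb{x})$ with a sublevel-set ball of the cost, which is just the standard proof of the fact you cite), and then read off the decrement condition \eqref{eqn:dt_bf_condition} from feasibility of $\mb{k}_h(\mb{x})$. Your explicit remark that the SD-BF Candidate properties are inherited by definition is a point the paper leaves implicit, but the substance is identical.
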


We use this controller in simulation on fully-actuated single and double inverted pendulums, with dynamics given by: 
\begin{align}
    \mb{D}(\mb{q}) \ddot{\mb{q}} + \mb{C}(\mb{q}, \dot{\mb{q}}) \dot{\mb{q}} + \mb{G}(\mb{q}) & = \mb{u} \label{eqn:lagrange_dyn}
\end{align}
where $\mb{D}$, $\mb{C}$, and $\mb{G}$ are functions encoding inertia, Coriolis, and gravity terms, $\mb{q}, \dot{\mb{q}} \in \R^m$ are configuration and velocity vectors, and $\mb{u} \in \R^m$ is a torque vector. These terms are detailed in Table \ref{table:dynamics}. With state vector $\mb{x} = (\mb{q}, \dot{\mb{q}}) \in \R^n$, the dynamics in \eqref{eqn:lagrange_dyn} can be expressed in the form \eqref{eq:integrator_dyn}, where $\ell = m$ and $\gamma = 2$. For the single inverted pendulum we use safe sets with the form of a Lyapunov sublevel ($\safety_h(\mb{x}) = 1 - \mb{x}^\top\mb{P}\mb{x}$ with $\mb{P}\in\mathbb{S}^2_{++}$ solving the continuous algebraic Riccati equation with feedback linearized dynamics, state cost matrix $\mb{I}_2$, and input cost matrix $\mb{I}_1$), a configuration ellipsoid ($\tilde{\safety}_h(\theta) = 1 - \theta^2$), and a halfspace ($\tilde{\safety}_h(\theta) = \theta + 0.1$). For the double inverted pendulum we enforce safety of a configuration ellipsoid ($\tilde{\safety}_h(\mb{q}) = 1 -  \|\mb{q}\|_2^2$). These sets are visualized in Fig.~\ref{fig:safesets}. We use Runge-Kutta approximations with $p=1$ (forward Euler) for the Lyapunov sublevel set and $p=2$ (midpoint rule) for the other settings. Controllers of the form \eqref{eq:sdcbf Controller} are employed with identity comparison functions; for the Lyapunov sublevel set, $\mb{k}_d$ is a feedback linearizing controller with auxiliary PD control (proportional gain $1$, derivative gain $2$), and for the other settings, $\mb{k}_d$ is a zero (constant) controller. With $11$ sample periods spaced logarithmically (over $[0.05, 0.5]$ and $[0.01, 0.1]$ seconds for the single and double inverted pendulums, respectively) and initial conditions sampled from each safe set, the closed-loop systems are simulated for $10$ seconds. For the inverted pendulum, $500$ initial states are sampled uniformly from the Lyapunov sublevel set, and $41 \times 41$ grids of initial states cover $[-1, 1] \times [-5, 5]$ for the configuration ellipsoid and $[-0.1, 1] \times [-5, 5]$ for the halfspace. For the double inverted pendulum, $500$ initial states are uniformly sampled with configurations in the unit Euclidean ball in $\R^2$ and velocities in  $[-1, 1]^2$. The worst-case distances from the safe sets are reported as a function of sample period in Fig. \ref{fig:sim_plot}. These distances decrease for sufficiently small sample periods.
\begin{figure}[h]
    \centering
    \includegraphics[trim={0in, 3in, 0in, 3in}, clip, scale =  0.41]{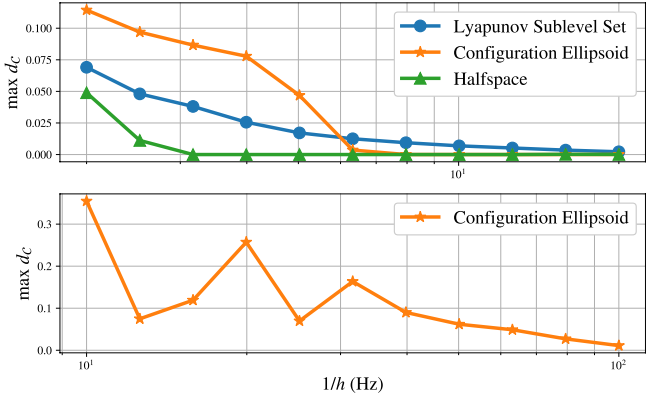}
    \caption{ The maximum distance from the safe set $\mathcal{C}$ (lower is better) achieved during trials vs. the the sampling frequency. The simulations and corresponding animations be found at \url{https://bit.ly/CBF-OP} and \url{https://vimeo.com/690803272}. \textbf{Top: } The inverted pendulum for 3 different safe sets. \textbf{Bottom: }  The double inverted pendulum.}
    \label{fig:sim_plot}
\end{figure}

\section{Conclusion}
In this work we have developed a novel approach for safety-critical sampled-data control through approximate discrete time models. Our main contribution, Sampled-Data Control Barrier Functions, provides a tool for designing practically safe controllers. Future work will study the relationship between approximation maps and control methods like backstepping.

\appendix

\subsection{Proof of Lemma \ref{lem:ballfit}}
\begin{proof}
As $\mathcal{X}$ is open, for every $\mb{x} \in K$, there is a corresponding open ball centered at $\mb{x}$ with radius $\delta_{\mb{x}} \in \R_{++}$ that is contained in $\mathcal{X}$. Let $B_{\mb{x}} \subset \mathcal{X}$ be the open ball centered at $\mb{x}$ of radius $\delta_{\mb{x}} / 2$. Consider the collection $\{B_{\mb{x}}: \mb{x} \in K \}$; this is an open cover for the compact set $K$, so some finite collection $B_{\mb{x}_1}, \dots, B_{\mb{x}_N}$ for some $\mb{x}_1, \dots, \mb{x}_N \in K$, respectively, also covers $K$. Let $\delta = \min_i \delta_{\mb{x}_i}$, and consider any $\mb{z} \in K \oplus \overline{B}_{\delta / 4}$. There is some $\mb{x} \in K$ such that $\| \mb{z} - \mb{x} \| \leq \delta / 4$ and some $i \in \{ 1, \dots, N \}$ such that $\| \mb{x} -  \mb{x}_i \| < \delta_{\mb{x}_i} / 2$. Thus, $\| \mb{z} - \mb{x}_i \| < \delta / 4 + \delta_{\mb{x}_i} / 2 < \delta_{\mb{x}_i}$, so $\mb{z} \in\mathcal{X}$. As $\mb{z}$ was arbitrary, $K \oplus \overline{B}_{\delta / 4} \subseteq \mathcal{X}$, so pick $\varepsilon \leq \delta / 4$. The set $K \oplus \overline{B}_{\varepsilon}$ is compact as $K\times\overline{B}_{\varepsilon}$ is compact and $(\mb{x},\mb{y})\mapsto\mb{x}+\mb{y}$ is continuous.
\end{proof}

\subsection{Proof of Theorem \ref{thm:onestep}}
\begin{proof}
Consider a compact set $K\subset\mathcal{X}$ and corresponding $h_1\in I$ and $M_K\in\R_{++}$, and fix a sample period $h\in(0,h_1)$. By Lemma \ref{lem:ballfit}, there exists an $\varepsilon\in\R_{++}$ such that the compact set $N=K\oplus\overline{B}_{\varepsilon}$ satisfies $N\subset\mathcal{X}$. 
By assumption, $\mb{k}_h$ is bounded on $K$, and $\mb{f}$ and $\mb{g}$ are bounded on $N$ since they are continuous, implying there exists an $M \in \R_{++}$ such that:
\begin{equation}
\label{eqn:dynbnd}
    \| \mb{f}(\mb{z}) + \mb{g}(\mb{z})\mb{k}_h(\mb{y}) \| \leq M, 
\end{equation}
for all $\mb{y}\in K$ and $\mb{z}\in N$. As $\mb{f}$ and $\mb{g}$ are locally Lipschitz over $\mathcal{X}$, they are globally Lipschitz over $N$. Therefore:
\begin{align}
\label{eqn:globlip}
    & \| \mb{f}(\mb{z}) + \mb{g}(\mb{z})\mb{k}_h(\mb{y}) -  (\mb{f}(\mb{y}) + \mb{g}(\mb{y})\mb{k}_h(\mb{y})) \| \\
    &~ \leq (L_{\mb{f}} + L_{\mb{g}}M_K) \| \mb{z} - \mb{y} \| = \rho(\| \mb{z} - \mb{y} \|), \nonumber
\end{align}
for all $\mb{y}\in K$ and $\mb{z} \in N$, where $L_{\mb{f}}, L_{\mb{g}} \in \R_{++}$ are Lipschitz constants for $\mb{f}$ and $\mb{g}$, respectively, and $\rho \in \mathcal{K}_\infty$ satisfies $\rho(r) = (L_{\mb{f}} + L_{\mb{g}} M_K)r$ for all $r \in \R_+$. Let $\mb{x}\in K$. Then:
\begin{align}\label{eqn:consistency-add-subtract}
    &\mb{F}_h^e(\mb{x}, \mb{k}_h(\mb{x})) - \mb{F}_h^{a, p}(\mb{x}, \mb{k}_h(\mb{x}))\\
    &~= \int_0^h [\mb{f}(\bs{\varphi}(t)) + \mb{g}(\bs{\varphi}(t)) \mb{k}_h(\mb{x})] ~\mathrm{d}t \nonumber \\&\qquad\qquad\qquad\qquad\qquad - h \sum_{i = 1}^p b_i (\mb{f}(\mb{z}_i) + \mb{g}(\mb{z}_i)\mb{k}_h(\mb{x})) \nonumber\\
    &~= \int_0^h [\mb{f}(\bs{\varphi}(t)) + \mb{g}(\bs{\varphi}(t)) \mb{k}_h(\mb{x}) - (\mb{f}(\mb{x}) + \mb{g}(\mb{x})\mb{k}_h(\mb{x}))] ~\mathrm{d}t \nonumber\\
    &~~~~ + h\sum_{i = 1}^p b_i [\mb{f}(\mb{x}) + \mb{g}(\mb{x})\mb{k}_h(\mb{x}) - (\mb{f}(\mb{z}_i) + \mb{g}(\mb{z}_i)\mb{k}_h(\mb{x}))], \nonumber
\end{align}
where we make use of the fact $\sum_{i = 1}^p b_i = 1$. 

To bound the first term in \eqref{eqn:consistency-add-subtract}, let $h_2 \in (0, h_1)$ satisfy $h_2 < \varepsilon / M$. By continuity of $\bs{\varphi}$, if $\bs{\varphi}(t_0) \not\in N$ for any $t_0 \in I$, then there is a minimal time $t^* \in (0, t_0)$ such that $\| \bs{\varphi}(t) - \mb{x} \| < \varepsilon$ for all $t \in [0, t^*)$ and $\| \bs{\varphi}(t^*) - \mb{x} \| = \varepsilon$. We have:
\begin{equation}
     \| \bs{\varphi}(t) - \mb{x} \| \leq \int_0^t \| \mb{f}(\bs{\varphi}(s)) + \mb{g}(\bs{\varphi}(s))\mb{k}_h(\mb{x}) \| ~\mathrm{d}s \leq Mt,
\end{equation}
for all $t \in [0, t^*]$. Since $\varepsilon = \| \bs{\varphi}(t^*) - \mb{x} \| \leq Mt^*$, we know that $t^* \geq \varepsilon / M > h_2$. Thus if $h \in (0, h_2)$, then:
\begin{equation}
    \| \bs{\varphi}(t) - \mb{x} \| \leq Mt \leq Mh < Mh_2 < \varepsilon,
\end{equation}
for all $t\in[0,h]$, implying $\bs{\varphi}(t)\in N$ for all $t\in[0,h]$.

To bound the second term in \eqref{eqn:consistency-add-subtract}, we show by induction that if $h$ is sufficiently small, then $\mb{z}_i \in N$ for all $i \in \{ 1, \dots, p \}$. First, since $\mb{z}_1 = \mb{x}$, we have $\mb{z}_1 \in N$. Next, for $i \in \{ 1, \dots,  p \}$, suppose $\mb{z}_j \in N$ for all $j \in \{ 1, \dots, i - 1 \}$. Considering the definition of $\mb{z}_i$ in \eqref{eqn:rkz} and the bound \eqref{eqn:dynbnd}, we have that:
\begin{align}
    \| \mb{z}_i - \mb{x} \| &\leq h \sum_{j = 1}^{i - 1} | a_{i,j} | \| \mb{f}(\mb{z}_j) + \mb{g}(\mb{z}_j)\mb{k}_h(\mb{x}) \|\\
    &\leq Mh \sum_{j = 1}^{i - 1} | a_{i,j} | \leq Mh(p - 1)\max_{j, k} | a_{j,k} | \triangleq Lh, \nonumber
\end{align}
Let $h^* \in(0,h_2)$ satisfy $h^* < \varepsilon / L$. Then for $h\in(0,h^*)$, we have $\| \mb{z}_i - \mb{x} \| < \varepsilon$, or $\mb{z}_i\in N$. Since this choice of $h^*$ does not depend on $i$, we can conclude by induction that if $h \in(0,h^*)$, then $\mb{z}_i \in N$ for all $i \in \{ 1, \dots, p \}$.

We have shown that if $h\in(0,h^*)$, then $\bs{\varphi}(t)\in N$ for all $t\in[0,h]$, and $\mb{z}_i\in N$ for $i\in\{1,\ldots,p\}$. Thus, using the bound \eqref{eqn:globlip} in \eqref{eqn:consistency-add-subtract}, we have that:
\begin{align}
    &\| \mb{F}_h^e(\mb{x}, \mb{k}_h(\mb{x})) - \mb{F}_h^{a, p}(\mb{x}, \mb{k}_h(\mb{x})) \|\\
    &~ \leq \int_0^h \rho(\| \bs{\varphi}(t) - \mb{x} \|) ~\mathrm{d}t + h \sum_{i = 1}^p b_i \rho(\| \mb{z}_i - \mb{x} \|) \nonumber\\
    &~ \leq h\rho(Mh) + h \sum_{i = 1}^p b_i \rho(Lh) \leq h\tilde{\rho}(h),\nonumber
\end{align}
and $\tilde{\rho} \in \K$ satisfies $\tilde{\rho}(r) = \rho(Mr) + \rho(Lr)$ for all $r \in \R_+$.
\end{proof}

\subsection{Proof of Theorem \ref{thm:regval}}
\begin{proof}
The boundary $\partial \C$ is compact as a closed subset of the compact set $\C$. Thus, $\sigma \triangleq \min_{\mb{x} \in \partial \C} \| \nabla \safety(\mb{x}) \|_2 $ is strictly positive since $0$ is a regular value. By Lemma \ref{lem:ballfit}, there is an $\varepsilon' \in \R_{++}$ with $\C \oplus \overline{B}_{\varepsilon'} \subset \mathcal{X}$ and $\C \oplus \overline{B}_{\varepsilon'}$ compact.

Consider a state $\mb{x}\in\C\oplus\overline{B}_{\varepsilon'}$ with $\mb{x} \not\in \mathcal{C}$. There exists a $\mb{y}\in\partial\C$ such $d_\C(\mb{x}) = \Vert\mb{y}-\mb{x}\Vert_2>0$. Since $\safety$ has 0 as a regular value, by \cite[Proposition 1.1.9]{clarke2008nonsmooth} we have that:
\begin{equation}
   \grad\safety(\mb{y}) =-\Vert\grad\safety(\mb{y})\Vert_2\frac{\mb{x}-\mb{y}}{\Vert\mb{x}-\mb{y}\Vert_2},
\end{equation}
that is $\grad\safety(\mb{y})$ is anti-parallel to $\mb{x}-\mb{y}$. As $\overline{B}_{\varepsilon'}$ is convex, we have that $(1-\lambda)\mb{y}+\lambda\mb{x} \in\C\oplus\overline{B}_{\varepsilon'}$ for all $\lambda\in[0,1]$. For some $\lambda^* \in [0, 1]$, the convex combination $\bs{\xi} \triangleq (1 - \lambda^*)\mb{y} + \lambda^*\mb{x}$ satisfies:
\begin{align}
    \safety(\mb{x}) &= \safety(\mb{y}) + (\mb{x} - \mb{y})^\top \grad\safety(\mb{y})\\
    &\qquad + \frac{1}{2}(\mb{x} - \mb{y})^\top \nabla^2\safety(\bs{\xi})(\mb{x} - \mb{y})\nonumber\\
    &= -\|\nabla \safety(\mb{y})\|_2\| \mb{x} - \mb{y} \|_2 + \frac{1}{2}(\mb{x} - \mb{y})^\top \nabla^2\safety(\bs{\xi})(\mb{x} - \mb{y}).\nonumber
\end{align}

Since $\C\oplus\overline{B}_{\varepsilon'}$ is compact, there is an upper bound $\mu \in \R_+$ such that $\max_{\mb{z} \in \C \oplus \overline{B}_{\varepsilon'}} \| \nabla^2 \safety(\mb{z}) \|_2 = \mu$, and thus we have:
\begin{equation}
    \safety(\mb{x}) \leq -(\sigma - \frac{\mu}{2}\| \mb{x} - \mb{y} \|_2) \| \mb{x} - \mb{y} \|_2.
\end{equation}
If $\| \mb{x} - \mb{y} \|_2 \leq \sigma / \mu$, then:
\begin{equation}
    \safety(\mb{x}) \leq -\frac{\sigma}{2} \| \mb{x} - \mb{y} \|_2 = -\frac{\sigma}{2} d_\C(\mb{x}).
\end{equation}
We pick $\varepsilon \in \R_{++}$ such that $\varepsilon \leq \min{\{\varepsilon', \sigma/\mu \}}$, and for any $\eta \in \R_{++}$, we pick $\delta \in \R_{++}$ such that $\delta < \sigma\eta / 2$.
\end{proof}

\subsection{Proof of Theorem \ref{thm:convexity}}
\begin{proof}
For all $(\mb{x}, \mb{u}) \in \mathcal{Z}$, denote:
\begin{equation}
    \mb{F}_h^{a, p}(\mb{x}, \mb{u}) = ((\mb{F}_1)_h^{a, p}(\mb{x}, \mb{u}), \dots, (\mb{F}_\gamma)_h^{a, p}(\mb{x}, \mb{u})),
\end{equation}
where $(\mb{F}_i)_h^{a, p}:  \mathcal{Z} \to \R^\ell$ for all $i \in \{ 1, \dots, \gamma \}$. For $(\mb{x}, \mb{u}) \in \mathcal{Z}$, the block vector $\mb{r}(\mb{x}, \mb{u})$ can be nonzero only in the last ($\gamma$th) block. Noting the block chain-of-integrators structure of $\mb{A}$, for any degree $d \in \{ 0, \dots, \gamma - 1 \}$, the block vector $\mb{A}^d\mb{r}(\mb{x}, \mb{u})$ can be nonzero only in the $(\gamma - d)$th block, and for a degree $d$ polynomial $\rho_d$, the block vector $\rho_d(\mb{A})\mb{r}(\mb{x}, \mb{u})$ can be nonzero only in the last $d + 1$ blocks (that is, blocks $\gamma - d$ through $\gamma$).

Consider a state-input pair $(\mb{x}, \mb{u}) \in \mathcal{Z}$. We have:
\begin{align}
    \mb{F}_h^{a, p}(\mb{x}, \mb{u}) &= \mb{x} + h \sum_{i = 1}^p b_i ( \mb{A}\mb{z}_i + \mb{r}(\mb{z}_i, \mb{u}) ), \label{eqn:proofapproxmap}\\ \label{eqn:zidef}
    \mb{z}_i &= \mb{x} + h \sum_{j = 1}^{i - 1} a_{i, j} (\mb{A}\mb{z}_j + \mb{r}(\mb{z}_j, \mb{u})),
\end{align}
with $\mb{z}_1 = \mb{x}$. By induction, for any $i \in \{ 1, \dots, p \}$, we show:
\begin{equation}\label{eqn:induction}
    \mb{z}_i = \rho_{i, i - 1}(\mb{A})\mb{x} + \sum_{j = 1}^{i - 1} \sigma_{i, i - j - 1}(\mb{A})\mb{r}(\mb{z}_j, \mb{u}),
\end{equation}
where $\rho_{i, i - 1}$ is a degree $i - 1$ polynomial, and for $j \in \{ 1, \dots, i - 1 \}$, $\sigma_{i, i - j - 1}$ is a degree $i - j - 1$ polynomial. Indeed, $\mb{z}_1 = \mb{I}\cdot\mb{x}$, and assuming \eqref{eqn:induction} holds for $0, \dots, i - 1$, substituting \eqref{eqn:induction} into \eqref{eqn:zidef} yields the following:
\begin{align}
    \mb{z}_i &= \underbrace{\overbrace{\Bigg(\mb{I} + h\sum_{j = 1}^{i - 1} a_{i, j} \overbrace{\mb{A}\rho_{j, j - 1}(\mb{A})}^{\mathrm{degree}~j}\Bigg)}^{\mathrm{degree}~i - 1}}_{\triangleq \rho_{i, i - 1}(\mb{A})}\mb{x} + h\sum_{j = 1}^{i - 1} a_{i, j} \mb{r}(\mb{z}_j, \mb{u})\nonumber\\
    &\quad~ + h\sum_{k = 1}^{i - 1} \sum_{j = 1}^{k - 1} a_{i, k} \mb{A}\sigma_{k, k - j - 1}(\mb{A})\mb{r}(\mb{z}_j, \mb{u}),
\end{align}
which we may further manipulate to obtain:
\begin{align}
    & \mb{z}_i - \rho_{i, i - 1}(\mb{A})\mb{x} \nonumber\\
    & = \sum_{j = 1}^{i - 1} \underbrace{h\Bigg( a_{i, j} + \sum_{k = j + 1}^{i - 1} a_{i, k}\underbrace{\mb{A}\sigma_{k, k - j - 1}(\mb{A})}_{\mathrm{degree}~ k - j} \Bigg)}_{\mathrm{degree}~ i - j - 1} \mb{r}(\mb{z}_j, \mb{u}),\\
    &~\triangleq \sum_{j = 1}^{i - 1} \sigma_{i, i - j - 1}(\mb{A})\mb{r}(\mb{z}_j, \mb{u}),
\end{align}
establishing \eqref{eqn:induction} holds for $i$. Substituting the expression \eqref{eqn:induction} into \eqref{eqn:proofapproxmap} and following a similar sequence of steps, we find a degree $p$ polynomial $\tilde{\rho}_p$, and for each $i \in \{ 1, \dots, p \}$, a degree $p - i$ polynomial $\tilde{\sigma}_{p - i}$ such that:
\begin{equation}
    \mb{F}_h^{a, p}(\mb{x}, \mb{u}) = \tilde{\rho}_{p}(\mb{A})\mb{x} + \sum_{i = 1}^p \tilde{\sigma}_{p - i}(\mb{A})\mb{r}(\mb{z}_i, \mb{u}).
\end{equation}
For $i \in \{ 1, \dots, p \}$, the term $\tilde{\sigma}_{p - i}(\mb{A})\mb{r}(\mb{z}_i, \mb{u})$ can be nonzero only in blocks $\gamma - (p - i) = q + i - 1$ through $\gamma$. The highest-order polynomial multiplying the block vectors $\mb{r}(\mb{z}_1, \mb{u}), \dots, \mb{r}(\mb{z}_p, \mb{u})$ is $\tilde{\sigma}_{p - 1} = \tilde{\sigma}_{\gamma - q}$. Therefore, the functions $(\mb{F}_1)_h^{a, p}, \dots, (\mb{F}_{q - 1})_h^{a, p}$ are independent of their second argument (they depend only on state). Moreover, $(\mb{F}_q)_h^{a, p}(\mb{x}, \mb{u})$ depends on the block vector $\mb{r}(\mb{z}_1, \mb{u}) = \mb{r}(\mb{x}, \mb{u})$, which depends on $\mb{u}$ affinely, and does not depend on the block vectors $\mb{r}(\mb{z}_2, \mb{u}), \dots, \mb{r}(\mb{z}_p, \mb{u})$, which may depend on $\mb{u}$ nonlinearly.

The composition $\safety_h\circ\mb{F}_h^{a,p}:\mathcal{Z}\to\R$ satisfies:
\begin{equation}
    \safety_h(\mb{F}_h^{a,p}(\mb{x},\mb{u})) = \tilde{\safety}_h((\mb{F}_1)^{a,p}_h(\mb{x},\mb{u}),\cdots,(\mb{F}_q)^{a,p}_h(\mb{x},\mb{u})), \nonumber
\end{equation}
for all $(\mb{x}, \mb{u}) \in \mathcal{Z}$. The composition of concave and affine functions is concave, so $\safety_h\circ \mb{F}^{a,p}_h$ is concave in its second argument, and $\phi_h$ in \eqref{eqn:cvxcnstrnt} is convex in its second argument.
\end{proof}

\subsection{Proof of Theorem \ref{thm:optprob}}
\begin{proof}
Consider any $h\in(0,h^*)$ and $\mb{x}\in\mathcal{X}$. As $\safety_h$ is a SD-CBF on $\C$, there exists a $\mb{u}'\in\R^m$ such that $(\mb{x},\mb{u}')\in\mathcal{Z}$ and:
\begin{equation}
    \safety_h(\mb{F}_h^{a,p}(\mb{x},\mb{u}'))-\safety_h(\mb{x}) \geq -h\alpha(\safety_h(\mb{x})),
\end{equation}
implying that $\mb{u}'\in\mathcal{F}(\mb{x})$. Thus the optimization problem in \eqref{eq:sdcbf Controller} is feasible. Define the compact, convex set:
\begin{equation}
    A = \left\{\mb{u}\in\R^m ~|~ \Vert\mb{u}-\mb{k}_d(\mb{x})\Vert_2^2 \leq \Vert\mb{u'}-\mb{k}_d(\mb{x})\Vert_2^2 \right\}.
\end{equation}
Note that $\mb{u}'\in A$. As the set $\mathcal{F}(\mb{x})$ is closed and convex, the set $A\cap\mathcal{F}(\mb{x})$ is compact, convex, and non-empty. As the cost is continuous and strictly convex with respect to $\mb{u}$, there is a unique minimizer $\mb{u}^*\in A\cap{\mathcal{F}}(\mb{x})$. We have $\Vert\mb{u}^*-\mb{k}_d(\mb{x})\Vert_2^2 \leq \Vert\mb{u}'-\mb{k}_d(\mb{x})\Vert_2^2  < \Vert\mb{u}-\mb{k}_d(\mb{x})\Vert_2^2$ for all $\mb{u}\in\mathcal{F}(\mb{x})\setminus A$, implying $\mb{u}^*$ is the unique minimizer in $\mathcal{F}(\mb{x})$. Thus:
\begin{equation}
    \safety_h(\mb{F}_h^{a,p}(\mb{x},\mb{k}_h(\mb{x})))-\safety_h(\mb{x}) \geq -h\alpha(\safety_h(\mb{x})),
\end{equation}
and as $\mb{x}$ and $h$ were arbitrary, we have that $\{\safety_h~|~h\in I\}$ is a family of SD-BFs on $\C$ for the family $(\mb{k}_h,\mb{F}^{a,p}_h)$.
\end{proof}

\bibliographystyle{IEEEtran} 
\bibliography{main}

\end{document}